\newtheorem{thm}{Theorem}
\newtheorem{lem}{Lemma}
\def\N{\mathbb{N}}
\def\Z{\mathbb{Z}}
\def\C{\mathbb{C}}
\def\F{\mathbb{F}}
\def\K{\mathbb{K}}
\def\M{\mathsf{M}}
\def\MM{\mathsf{MM}}
\def\Ff{\mathsf{F}}
\def\T{\mathsf{T}}
\def\E{\mathsf{E}}
\def\SS{\mathsf{S}}
\title{A Simple and Fast Algorithm  for Computing \\ the $N$-th Term of a Linearly Recurrent Sequence}
 \author{\href{https://specfun.inria.fr/bostan/}{Alin Bostan}$^\star$ and \href{{https://q.c.titech.ac.jp/mori/}}{Ryuhei Mori}$^\dagger$
\\
$^\star${Inria, Palaiseau, France} and  $^\dagger${Tokyo Institute of Technology, Japan}                          
\\   
\href{mailto:alin.bostan@inria.fr}{\tt alin.bostan@inria.fr}, \; \href{mailto:mori@c.titech.ac.jp}{\tt mori@c.titech.ac.jp}
}                                 
\date{August 19, 2020}
\begin{document}

\maketitle    

\begin{abstract} 
We present a simple and fast algorithm for computing the $N$-th term of a given linearly recurrent sequence.
Our new algorithm uses $O(\M(d) \log N)$ arithmetic operations, where $d$ is the order of the recurrence,
and $\M(d)$ denotes the number of arithmetic operations for computing the product of two polynomials of degree~$d$.
The state-of-the-art algorithm, due to Charles~Fiduccia (1985), has the same arithmetic complexity up to a constant factor.
Our algorithm is simpler, faster and obtained by a totally different method.
We also discuss several algorithmic applications, notably to      
polynomial modular exponentiation,
powering of matrices and high-order lifting.
\end{abstract}

\noindent {\bf Keywords}: 
Algebraic Algorithms;
Computational Complexity;
Linearly Recurrent Sequence;
Rational Power Series;
Fast Fourier Transform

\section{Introduction}
                                         
\subsection{General context}
Computing efficiently selected terms in sequences is a basic and fundamental
algorithmic problem, whose applications are ubiquitous, for instance in 
theoretical computer science~\cite{RaEi04,LuZi08}, 
algebraic complexity theory~\cite{PaSt73,Strassen74}, 
computer algebra~\cite{Giesbrecht95,Storjohann02,Lecerf10},
cryptography~\cite{GoHa99,GoHaWu01,GiGo05,GHMV13}, 
algorithmic number theory~\cite{Strassen76,AKS04},
effective algebraic geometry~\cite{BoGaSc07,Harvey14},
numerical analysis~\cite{Mezzarobba10,NaPa13} 
and computational biology~\cite{NuDu13}.

\smallskip
In simple terms, the problem can be formulated as follows:
  
\bigskip
\vspace{20pt}
\hrule
\vspace{0pt}
\begin{quote}
	Given a sequence $(u_n)_{n \geq 0}$ in an effective ring\footnote[0]{The ring $R$ is assumed to be commutative with unity and \emph{effective} in the sense that its elements are represented using some data structure, and there exist algorithms for performing the basic ring operations $(+,-,\times)$ and for testing equality of elements in~$R$.}
 $R$, and given a positive integer $N\in\mathbb{N}$, compute the term $u_N$ as fast as possible.
\end{quote}         
\vspace{0pt}
\hrule
\vspace{10pt}

Here, the input $(u_n)_{n \geq 0} \in R^\N$ is assumed to be a recurrent
sequence, specified by a data structure consisting in a \emph{recurrence
relation} and sufficiently many \emph{initial terms} that uniquely determine
the sequence $(u_n)_{n \geq 0}$.

Efficiency is measured in terms of \emph{ring operations} (algebraic model),
or of \emph{bit operations} (Turing machine model).   
The cost of an algorithm is respectively estimated in terms of 
\emph{arithmetic complexity} or of \emph{binary complexity}.
Both measures have their
own usefulness: the algebraic model is relevant when ring operations have
essentially unit cost (typically, when $R$ is a finite ring such as the prime
field~$\F_p := \Z/p \Z$), while the bit complexity model is relevant when
elements of~$R$ have a variable bitsize, and thus ring operations in $R$ have
variable cost (typically, when $R$ is the ring $\Z$ of integer numbers).

\smallskip The recurrence relation satisfied by the input sequence $(u_n)_{n\geq 0}$ might be of several types:     

\begin{itemize}
	\item[(C)] \emph{linear with constant coefficients}, that is of the form,
\[u_{n+d}=c_{d-1}u_{n+d-1}+\dots+c_0u_n,\qquad n \ge 0,\]
for some given coefficients $c_0, \ldots, c_{d-1}$ in $R$.	
In this case we simply say that the sequence is \emph{linearly recurrent} (or, \emph{C-recursive}). The most basic examples are the \textcolor{magenta}{\href{https://oeis.org/A051128}{geometric}} sequence $(q^n)_{n \geq 0}$, for $q\in R$, and the 	
{\href{https://oeis.org/A000045}{Fibonacci}} sequence $(F_n)_{n}$ with $F_{n+2} = F_{n+1} + F_n$ 
and $F_0=0, F_1=1$.      
	\item[(P)] \emph{linear with polynomial coefficients}, that is of the form,
\[u_{n+d}=c_{d-1}(n) u_{n+d-1}+\dots+c_0(n) u_n,\qquad n \ge 0,\]
for some given rational functions $c_0(x), \ldots, c_{d-1}(x)$ in $R(x)$.	
In this case the sequence is called \emph{holonomic} (or, \emph{P-recursive}). Among the most basic examples, other than the C-recursive ones, there is the \textcolor{magenta}{\href{https://oeis.org/A000142}{factorial}} sequence  $(n!)_{n \geq 0} = (1, 1, 2, 6, 24, 120, \ldots)$
and the 	  
\textcolor{magenta}{\href{https://oeis.org/A001006}{Motzkin}} sequence
$(u_n)_{n \geq 0} = (1, 1, 2, 4, 9, 21, 51, \ldots)$ specified by the recurrence $u_{n+1}=\frac{2n+3}{n+3} \cdot u_{n} + \frac{3n}{n+3} \cdot u_{n-1}$ and the initial conditions $u_0=u_1=1$.   
	\item[(Q)] \emph{linear with polynomial coefficients in $q$ and $q^n$}, that is of the form,
\[u_{n+d}=c_{d-1}(q, q^n) u_{n+d-1}+\dots+c_0(q, q^n) u_n,\qquad n \ge 0,\]
for some $q\in R$ and some rational functions $c_0(x,y), \ldots, c_{d-1}(x,y)$ in $R(x,y)$.	
In this case, the sequence is called \emph{$q$-holonomic}; such a sequence can be seen as a $q$-deformation of a holonomic sequence (in the sense that when $q \mapsto 1$, the limit sequence tends to be holonomic).
A typical example is the \textcolor{magenta}{\href{https://oeis.org/A069777}{$q$-factorial}} $
[n]_q!:=  (1+q) \! \cdots \! (1+q+\cdots +q^{n-1})$.
\end{itemize}

In all these classes of examples, the recurrence is \emph{linear}, and the
integer~$d$ that governs the length of the recurrence relation is called the
\emph{order} of the corresponding linear recurrence.

Of course, some interesting sequences satisfy \emph{nonlinear} recurrences, as is the case for the so-called              
\textcolor{magenta}{\href{https://oeis.org/A006720}{Somos-4}} sequence $(1, 1, 1, 1, 2, 3, 7, 23, 59,\ldots)$ defined by: 
$u_{n+4}=  \bigl(u_{n+3} u_{n+1}+u_{n+2}^2\bigr)/u_n$
together with $u_0=\cdots =u_3=1$, 
but we will not consider this larger class in what follows. 

\medskip For computing the $N$-th term $u_N$ in a sequence of type (P), resp.~(Q), the best known algorithms are presented in~\cite{ChCh88,BoGaSc07},
resp.~in~\cite{Bostan20}. In the algebraic model, they rely on an
algorithmic technique called \emph{baby-step / giant-step}, which allows
to compute~$u_N$ using a number of operations in~$R$ that is almost linear
in~$\sqrt{N}$, up to logarithmic factors. This should be contrasted with the
direct iterative algorithm, of arithmetic complexity linear in~$N$.

In the bit model, the same references provide different algorithms based on a
different technique, called \emph{binary splitting}; these algorithms are
\emph{quasi-optimal} in the sense that they are able to compute~$u_N$ in a
number of bit operations almost linear (up to logarithmic factors) in
the bitsize of the output value~$u_N$. Once again, this should be contrasted
with the direct iterative algorithm, whose binary complexity is larger by at
least one order of magnitude (e.g., in case (P) the naive algorithm has bit
complexity $O(N^3)$).

\smallskip 
\subsection{The case of C-recursive sequences}

In what follows, we will restrict our attention to the case (C) only. This
case obviously is a subcase of both cases (P) and (Q). It presents an
exceptional feature with respect to the algebraic model: contrary to the
general cases (P) and (Q), in case (C) it is possible to compute the
term~$u_N$ using a number of arithmetic operations in~$R$ that is only
logarithmic in~${N}$. 

For the geometric sequence~$u_n = q^n$, this is known since Pingala
($\sim$200~BC) who seemingly is the inventor of the algorithmic method of
\emph{binary powering}, or \emph{square-and-multiply}~\cite[\S4.6.3]{Knuth81}.
The corresponding algorithm is recursive and based on the equalities
\[q^N=\begin{cases}(q^{N/2})^2,&\text{if $N$ is even,}\\
q\cdot (q^{\frac{N-1}{2}})^2,&\text{else.}\end{cases}\]   
The arithmetic complexity of this algorithm is bounded by $2\log N$ multiplications\footnote{In all this article, the notation $\log$ refers to the logarithm in base $2$.}
in~$R$, which represents a tremendous improvement compared to the naive
iterative algorithm that computes the term $q^N$ in $N-1$  multiplications in~$R$, by simply unrolling the
recurrence $u_{n+1} = q \cdot u_n$ with $q_0=1$.     

In the general case (C), Miller and Spencer Brown showed in 1966~\cite{MiBr66}
that a similar complexity can be obtained by converting\footnote{In~\cite[p.~74]{CeMiPi87}, the authors call this \emph{un truc bien connu} (``a well-known trick'').} the scalar recurrence of order~$d$
\begin{equation} \label{eq:rec}
u_{n+d}=c_{d-1}u_{n+d-1}+\dots+c_0u_n,\qquad n \ge 0,
\end{equation}    
into a vector recurrence of order 1
\begin{equation} \label{eq:rec-vect}
\underbrace{\begin{bmatrix}u_{n}\\ u_{n+1}\\\vdots \\u_{n+d-1}
\end{bmatrix}}_{v_{n}}
=\underbrace{\begin{bmatrix}&1&& \\ & & \ddots & \\ & & & 1 \\ c_0 & c_1 & \cdots & c_{d-1}
  \end{bmatrix}
  }_{M} 
\times
\underbrace{\begin{bmatrix}u_{n-1}\\u_{n}\\\vdots \\u_{n+d-2}
\end{bmatrix}}_{v_{n-1}},
\qquad n\ge1,
\end{equation}    
and by using binary powering in the ring $\mathcal{M}_d(R)$,
of $d \times d$ matrices with coefficients in $R$,
in order to compute $M^N$ recursively by
\[M^N=\begin{cases}(M^{N/2})^2,&\text{if $N$ is even,}\\
M\cdot (M^{\frac{N-1}{2}})^2,&\text{else.}\end{cases}\] 
From there, $u_N$ can be read off the matrix-vector product $v_N = M^N \cdot
v_0$.
The arithmetic complexity of this method is $O(d^\theta \log N)$ operations in~$R$, where
$\theta \in [2,3]$ is any feasible exponent for matrix multiplication in
$\mathcal{M}_d(R)$.

Strangely enough, the paper~\cite{MiBr66} of Miller and Spencer Brown was
largely overlooked in the subsequent literature, and their result has been
rediscovered several times in the~1980s. For instance, Shortt~\cite{Shortt78}
proposed a $O(\log N)$ algorithm for computing the $N$-th Fibonacci number\footnote{Shortt's algorithm had actually appeared before, in the 1969 edition of Knuth's book~\cite[p.~552]{Knuth69}, as a solution of Ex.~26 (p.~421, \S 4.6.3). The algorithm is based on the so-called \emph{doubling formulas}
$(F_{2n}, F_{2n-1}) = (F_n^2+2F_nF_{n-1}, F_n^2+F_{n-1}^2)$, actually due to Lucas (1876) and Catalan (1886), see e.g.~\cite[Ch.~XVII]{Dickson1919}.
The currently best implementation for computing $F_N$ over $\Z$ 
(\href{https://gmplib.org/manual/Fibonacci-Numbers-Algorithm}{mpz\_fib\_ui} from \href{https://gmplib.org}{GMP}) uses a variant of this method, 
requiring just two squares (and a few additions) per binary digit of~$N$.}$^{,}$\footnote{Already in 1899, G. de Rocquigny asked ``for an expeditious procedure to compute a very distant term of the Fibonacci sequence''~\cite{Rocquigny1899}.
In response, several methods (including the one mentioned by Knuth in~\cite[p.~552]{Knuth69}) have been published one year later by Rosace (alias), E.-B. Escott, E.~Malo, C.-A.~Laisant and G.~Picou~\cite{REMLP1900}.
This fact does not seem to have been noticed in the modern algorithmic literature before the current paper, although the reference~\cite{REMLP1900} is mentioned in Dickson's formidable book~\cite[p.~404]{Dickson1919}.
} 
and extended it together with Wilson~\cite{WiSh80} to the computation of
order-{$d$} {F}ibonacci numbers in $O(d^3 \log N)$ arithmetic operations. The
same cost has also been obtained by 
Dijkstra~\cite{Dijkstra79} and
Urbanek~\cite{Urbanek80}.
Pettorossi~\cite{Pettorossi80}, and independently Gries and
Levin~\cite{GrLe80}, improved the algorithm and lowered the cost 
to $O(d^2 \log N)$, essentially by taking into account the sparse structure of
the matrix~$M$.
See also~\cite{Er83,MaRe84,Er86,Er88,Holloway88,PrTa89,GiGo06,Khomovsky18} for similar algorithms.

\subsection{Fiduccia's algorithm}\label{sec:fiduccia}

The currently best algorithm is due to Fiduccia\footnote{The idea already appears in the 1982 conference paper~\cite{Fiduccia82}.
We have discovered that the same algorithm had been sketched by D.~Knuth in the corrections/changes to~\cite{Knuth81} published in 1981       in~\cite[p.~28]{Knuth81e}, where he attributes the result to R.~Brent.
Almost surely, C.~Fiduccia was not aware about this fact.}~\cite{Fiduccia85}. It is based
on the following observation: the matrix $M$ in~\eqref{eq:rec-vect} is the transpose of the companion
matrix $C$ which represents the $R$-linear multiplication-by-$x$ map from the
quotient ring $R[x]/(\Gamma)$ to itself, where $\Gamma = x^d-\sum_{i=0}^{d-1} c_i x^i$.
Therefore,
denoting by $e$ the row vector $e = \begin{small}\begin{bmatrix}1 & 0 & \cdots & 0 \end{bmatrix} \end{small}$,
the $N$-th term $u_N$ equals
\begin{equation}    \label{eq:fiduccia}
u_N 
=  e\cdot v_N  
= e\cdot M^N \cdot v_0
=  \left( C^N \cdot e^T \right)^T \cdot v_0 
= \langle x^N \bmod \Gamma, \; v_0  \rangle,
\end{equation}  
where the inner product takes place between the vector $v_0
=
\begin{bmatrix}
	u_{0} & \cdots & u_{d-1}
\end{bmatrix} 
$ of initial
terms of $(u_n)_{n \geq 0}$, and the vector whose entries are the coefficients
of the remainder $(x^N \bmod \Gamma)$ of the Euclidean division of
$x^N$ by $\Gamma$. 

Therefore, computing $u_N$ is reduced to computing the coefficients of $(x^N
\bmod \Gamma)$, and this can be performed efficiently by using binary powering in
the quotient ring $A := R[x]/(\Gamma)$, at the cost of $O(\log N)$ multiplications
in~$A$. Each multiplication in $A$ may be performed using $O(\M(d))$
operations in~$R$~\cite[Ch.~9, Corollary~9.7]{GaGe13}, where $\M(d)$ denotes
the arithmetic cost of polynomial multiplication in $R[x]$ in degree~$d$. 
Using Fast Fourier Transform (FFT) methods, one may take $\M(d) = O(d
\log d)$ when $R$ contains enough roots of unity, and $\M(d) = O(d \log d \log
\log d)$ in general~\cite[Ch.~8]{GaGe13}. 

In conclusion, Fiduccia's algorithm allows the computation of the $N$-th
term~$u_N$ of a linearly recurrent sequence of order $d$ using $O(\M (d) \log
N)$ operations in~$R$. Since 1985, this is the state-of-the-art algorithm for
this task in case~(C).

A closer inspection of the proof of \cite[Corollary~9.7]{GaGe13} shows that a more
precise estimate for the arithmetic cost of Fiduccia's algorithm is 
\begin{equation} \label{eq:cost-Fiduccia}
 \Ff(N,d) =        3 \, \M (d) \lfloor \log N\rfloor + O(d \log N)
\end{equation} 
operations in $R$. This comes from the fact that squaring\footnote{Note that multiplying by~$x$ in~$A$ is much easier and has linear arithmetic cost $O(d)$.} in $A = R[x]/(\Gamma)$
is based on one polynomial multiplication in degree less than $d$ followed by
an Euclidean division by $\Gamma$ of a polynomial of degree less than $2d$. The
Euclidean division is reduced to a power series division by the reversal
$Q(x) :=x^d \cdot \Gamma(1/x)$ of $\Gamma$, followed by a polynomial
multiplication in degree less than $d$. The reciprocal of $Q(x)$ is
precomputed modulo $x^{d}$ once and for all (using a formal Newton iteration)
in $\, 3\, \M(d) + O(d)$ operations in $R$, and then each squaring in $A$ also
takes $3 \, \M(d) + O(d)$ operations in~$R$. The announced cost
from~\eqref{eq:cost-Fiduccia} follows from the fact that binary powering uses
$\lfloor \log N\rfloor$ squarings and at most $\lfloor \log N\rfloor$ 
multiplications by~$x$.


\subsection{Main results}      \label{sec:main}

\medskip
We propose in this paper a new and simpler algorithm, with a better cost. More
precisely, our first main complexity result is:

\begin{thm} \label{thm:main} One can compute the $N$-th term of a
linearly recurrent sequence of order $d$ with coefficients in a ring $R$ using
\begin{equation*} 
	\T(N,d) = 
	2 \, \M(d) \lceil\log (N+1)\rceil + \M(d)  
\end{equation*} 
arithmetic operations in $R$. 
\end{thm}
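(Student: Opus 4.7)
The plan is to reformulate the problem as extracting a single coefficient of a rational power series, and then to halve the target index using a Graeffe-type identity that costs only two polynomial multiplications of degree~$d$ per halving step. Set $Q(x):= x^d\,\Gamma(1/x) = 1 - c_{d-1}x - c_{d-2}x^2 - \cdots - c_0 x^d$. The recurrence~\eqref{eq:rec} is equivalent to saying that the generating series $U(x) = \sum_{n\geq 0} u_n x^n$ has the form $P(x)/Q(x)$ for a unique polynomial $P$ with $\deg P < d$, determined from the initial conditions by $P(x) \equiv Q(x)\,(u_0 + u_1 x + \cdots + u_{d-1} x^{d-1}) \pmod{x^d}$. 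Computing $P$ this way costs one truncated polynomial multiplication, i.e.\ $\M(d)$ operations; this accounts for the additive $\M(d)$ term in the theorem.

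The main loop relies on the identity
\[
\frac{P(x)}{Q(x)} \;=\; \frac{P(x)\,Q(-x)}{Q(x)\,Q(-x)}.
\]
The denominator $Q(x)Q(-x)$ is an even polynomial, hence of the form $V(x^2)$ with $\deg V = d$ and $V(0)=1$; the numerator splits uniquely as $P(x)Q(-x) = A(x^2) + x\,B(x^2)$ with $\deg A, \deg B < d$. Consequently,
\[
[x^N]\,\frac{P(x)}{Q(x)} \;=\; [y^{\lfloor N/2\rfloor}]\,\frac{R(y)}{V(y)}, \qquad \text{with } R = A \text{ if $N$ is even, } R = B \text{ otherwise.}
\]
Each such step replaces the triple $(P,Q,N)$ by a new one of exactly the same shape (numerator of degree $<d$, degree-$d$ denominator with constant term~$1$, halved index), at the price of two polynomial multiplications of degree $\le d$: one for $P(x)Q(-x)$ and one for $Q(x)Q(-x)$, totalling $2\,\M(d)$ operations (plus $O(d)$ bookkeeping for sign flips and even/odd splittings, absorbed in the leading-term bound).

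Iterating $k := \lceil\log(N+1)\rceil$ times drives the index down to~$0$, at which point $u_N$ is simply the constant coefficient of the final numerator, since every intermediate denominator has constant term~$1$. Summing the costs yields $\M(d) + 2\,\M(d)\,\lceil\log(N+1)\rceil$, as claimed. The main point to verify carefully is the invariant that the denominator always has constant term~$1$ and exactly degree~$d$ throughout the iteration, so that no Euclidean division is ever required; this is precisely what lowers the leading constant from Fiduccia's $3\,\M(d)$ per step down to $2\,\M(d)$ per step.
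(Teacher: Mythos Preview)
Your proposal is correct and follows essentially the same approach as the paper: reduce to coefficient extraction in $P/Q$, then iterate the Graeffe step $P/Q \mapsto (U_{\rm e}\text{ or }U_{\rm o})/V$ with $V(x^2)=Q(x)Q(-x)$, at a cost of two degree-$d$ multiplications per halving, plus one initial multiplication to build~$P$. Your explicit remark that the invariant $Q(0)=1$ is preserved (so the final answer is simply the constant term of the numerator, with no division) is a nice touch; the paper keeps the slightly more general ``$Q(0)$ invertible'' and returns $P(0)/Q(0)$, and correspondingly records the hypothesis $c_0\neq 0$ to ensure $\deg V=d$ exactly---a point your ``degree-$d$ denominator'' invariant silently relies on.
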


The proof of this result is based on a very natural algorithm, which will be
presented in Section~\ref{sec:proof}. Let us remark that it improves by a
factor of~$1.5$ the complexity of Fiduccia's algorithm. 

This factor is even higher in the FFT setting, where polynomial multiplication
is assumed to be performed using Fast Fourier Transform techniques. In this
setting, we obtain the following complexity result, which will be proved in
Section~\ref{sec:FFT}.

\begin{thm} \label{thm:main-FFT} One can compute the $N$-th term of a
linearly recurrent sequence of order $d$ with coefficients in a field $\K$ supporting FFT using
\begin{equation*} 
	\sim \frac{2}{3} \, \M(d) \log (N)
\end{equation*} 
arithmetic operations in $\K$. 
\end{thm}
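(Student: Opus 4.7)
My plan is to sharpen the analysis of the algorithm underlying Theorem~\ref{thm:main} in the specific FFT model. The target is to bring each iteration's cost down from the naive $2\M(d)$ to the equivalent of two size-$d$ DFTs, which asymptotically equals $\tfrac{2}{3}\M(d)$ in this model.

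Recall that the algorithm of Section~\ref{sec:proof} writes $u_N = [x^N]\, P(x)/Q(x)$ with $P,Q$ of degree $\leq d$ and halves the index via the identity
\[
\frac{P(x)}{Q(x)}=\frac{P(x)Q(-x)}{Q(x)Q(-x)}=\frac{A(x^2)+xB(x^2)}{V(x^2)},
\]
where $V(x^2):=Q(x)Q(-x)$ and $A,B$ are the even and odd parts of $P(x)Q(-x)$. The two products $P\cdot Q(-x)$ and $Q\cdot Q(-x)$ per step account for the $2\M(d)\log N$ estimate of Theorem~\ref{thm:main}.

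The FFT speed-up comes from carrying $P$ and $Q$ throughout the recursion in evaluation form at the $2d$-th roots of unity of the current variable. Three observations then compress the iteration. First, $Q(-x)$'s DFT is a permutation of $Q$'s DFT (replacing $\omega$ by $-\omega$ swaps the two halves of the roots of unity), so $Q(-x)$ never requires its own forward transform. Second, the evaluation forms of $P\cdot Q(-x)$ and $Q\cdot Q(-x)$ follow by pointwise products, at an $O(d)$ cost. Third, since $Q(x)Q(-x)$ is even in $x$ and we only retain the even or odd half of $P(x)Q(-x)$, the new polynomials $V(y)$ and $A(y)$ (or $B(y)$) can be read off in evaluation form at the $d$-th roots of unity in $y=x^2$, again by $O(d)$ linear combinations of the values we already hold.

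It remains to upsample each of these two polynomials from its $d$ values at the $d$-th roots of unity in $y$ back to $2d$ values at the $2d$-th roots of unity in $y$, in preparation for the next iteration. For each polynomial $p$, I plan to do this by one inverse DFT of size $d$ (recovering $p$'s coefficients), a pointwise twist $p_k\mapsto\eta^k p_k$ where $\eta$ is a primitive $2d$-th root of unity, and one forward DFT of size $d$ applied to the twisted sequence (which yields $p$'s values at the odd-index $2d$-th roots, the even-index ones already being known). The per-polynomial cost is then that of a single size-$2d$ DFT, and the per-iteration total is equivalent to two size-$2d$ DFTs, i.e.\ $\sim\tfrac{2}{3}\M(d)$; summed over $\log N$ levels of recursion, this gives the announced bound. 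The main obstacle will be the careful bookkeeping of the leading-coefficient off-by-one for $Q$ (whose degree is exactly $d$, rather than $<d$, so that one known coefficient must be tracked separately to avoid aliasing in the size-$d$ inverse DFT) and the one-time $O(\M(d))$ cost of setting up and reading out the evaluation forms at the two ends of the recursion, both of which affect only the lower-order $o(\M(d)\log N)$ remainder.
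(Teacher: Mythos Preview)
Your proposal is correct and follows essentially the same approach as the paper: carry $P$ and $Q$ in DFT form, exploit that the DFT of $Q(-x)$ is a permutation of that of $Q$, form the two products pointwise, extract the half-size DFTs of the even/odd parts by linear combinations, and then ``upsample'' each back to full size via an inverse DFT of size $d$, a twist by a primitive $2d$-th root of unity, and a forward DFT of size $d$---this is exactly the paper's \textsf{UP} procedure (Algorithm~\ref{alg:up}). The only cosmetic difference is that the paper fixes the DFT length to a power of two $2^k\ge 2d+1$ from the outset, which absorbs the degree-$d$ aliasing issue you flag for $Q$; your plan to track the leading coefficient separately is an equivalent fix.
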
     

Algorithms~\ref{alg:base} and~\ref{alg:base-rec}  
(underlying Theorem~\ref{thm:main})
and
Algorithm~\ref{alg:FFT} (underlying Theorem~\ref{thm:main-FFT})
 are both of LSB-first (least significant bit first)
type. This prevents them from computing simultaneously
\emph{several} consecutive terms of high indices, such as $u_N, \ldots,
u_{N+d-1}$. This makes a notable difference with Fiduccia's algorithm from
\S\ref{sec:fiduccia}.                          
For this reason, we will design a second algorithm, of   
MSB-first (most significant bit first) type, by ``transposing'' 
Algorithm~\ref{alg:base}. This leads to the following complexity result.

\begin{thm} \label{thm:main-t} One can compute the terms of indices $N-d+1, \ldots, N$ of a
linearly recurrent sequence of order $d$ with coefficients in a ring $R$ using
\begin{equation*} 
	2 \, \M(d)  \lceil\log (N+1)\rceil + O(\M(d))
\end{equation*} 
arithmetic operations in $R$. 
\end{thm}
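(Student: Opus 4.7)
The plan is to derive the MSB-first algorithm by applying Tellegen's transposition principle to Algorithm~\ref{alg:base} (which proves Theorem~\ref{thm:main}), followed by a single modular multiplication. The starting observation will be that, for fixed $N$ and fixed recurrence coefficients $c_0, \ldots, c_{d-1}$, Algorithm~\ref{alg:base} is a straight-line program that is \emph{linear} in the initial conditions $v_0 = (u_0, \ldots, u_{d-1})$: the binary expansion of $N$ selects the sequence of basic operations performed, but for any fixed $N$ each such operation is linear in the $u_i$'s. By Fiduccia's identity~\eqref{eq:fiduccia}, the linear form it realizes is $v_0 \mapsto \langle x^N \bmod \Gamma,\, v_0\rangle$, at the cost asserted by Theorem~\ref{thm:main}.

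Next, by Tellegen's transposition principle, the transpose of this linear SLP has the same arithmetic cost up to lower-order terms absorbed into the stated bound. It computes the adjoint map $R \to R^d$, $s \mapsto s \cdot (x^N \bmod \Gamma)$. Running the transposed program (with $N$ replaced by $N - d + 1$) on the scalar input $s = 1$ produces the polynomial $r^\star(x) := x^{N-d+1} \bmod \Gamma$ in $2\,\M(d)\lceil\log(N+1)\rceil + O(\M(d))$ operations in $R$.

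Finally, to extract the $d$ consecutive terms from $r^\star$ and $v_0$: by Fiduccia's identity, $u_{N-d+1+k} = \langle x^k r^\star(x) \bmod \Gamma,\, v_0\rangle$ for $k = 0, \ldots, d-1$, so the $R$-linear map $v_0 \mapsto (u_{N-d+1+k})_{k=0,\ldots,d-1}$ admits as its Tellegen-transpose the map $(s_0, \ldots, s_{d-1}) \mapsto (\sum_k s_k x^k) \cdot r^\star \bmod \Gamma$, which is one modular multiplication by $r^\star$ of cost $O(\M(d))$ (with a Newton inverse of $\Gamma$ precomputed). Transposing back yields an SLP of cost $O(\M(d))$ for the direct map; chaining it with the first stage gives the claimed total cost. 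The main subtlety I anticipate is the rigorous verification that Algorithm~\ref{alg:base} is a linear SLP in $v_0$, a property that depends on the precise basic blocks used in its construction in Section~\ref{sec:proof} but which should hold since the operations performed on $v_0$ (polynomial multiplications by fixed polynomials, reversals, truncations, even/odd splittings) are all linear in the coordinates of $v_0$.
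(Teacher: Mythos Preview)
Your approach is correct and reaches the claimed bound, but it takes a different route from the paper's. The paper explicitly remarks that its MSB-first Algorithm~\ref{alg:baset} is ``essentially equivalent to the transposition of Algorithm~\ref{alg:base}'', yet it deliberately avoids invoking Tellegen's principle as a black box. Instead, it derives Algorithm~\ref{alg:baset} by hand from the recursive identity
\[
\mathcal{F}_{N,d}\!\left(\frac{1}{Q(x)}\right) \;=\; \mathcal{F}_{2d-1,d}\bigl(Q(-x)\,S(x)\bigr),
\]
where $S$ is built from $\mathcal{F}_{\lfloor N/2\rfloor,d}(1/V)$ with $V(x^2)=Q(x)Q(-x)$; the cost of $2\,\M(d)$ per step is obtained by observing that the outer product $Q(-x)S(x)$ is needed only as a middle product. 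From this slice of $1/Q$ the paper recovers $x^N\bmod\Gamma$ via Lemma~\ref{lem:power}, and then the $d$ consecutive terms of a general sequence via the Hankel product~\eqref{sys:hankel-long} (Algorithm~\ref{alg:NewFiduccia}).

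What your route buys is conciseness: once Theorem~\ref{thm:main} is established, Tellegen delivers Theorem~\ref{thm:main-t} almost for free and explains conceptually why the two costs coincide. What the paper's route buys is an explicit, directly implementable algorithm with no appeal to the SLP formalism, in which the appearance of the middle product is made transparent rather than hidden inside a generic transposition. Your second use of Tellegen (to realize the map $v_0\mapsto(u_{N-d+1+k})_k$ by transposing a modular multiplication) is also valid, but somewhat indirect compared to the paper's single polynomial multiplication against the Hankel matrix~${\bf H}_d$, which achieves the same effect without a detour through the adjoint.
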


The method underlying this complexity result is based on
Algorithms~\ref{alg:baset},~\ref{alg:baset-bis} and~\ref{alg:modexp}, which
are presented in Section~\ref{sec:msb}. Along the way, using the MSB-first
Algorithm~\ref{alg:baset}, we improve the cost of \emph{polynomial modular
exponentiation}, which is a central algorithmic task in computer algebra, with
many applications. Since this result has an interest \emph{per se}, we isolate
it here as our last complexity result.

\begin{thm} \label{thm:main-modexp}  
Given $N\in\N$ and a polynomial $\Gamma(x)$ in $R[x]$ of degree~$d$, one can compute $x^N \bmod \Gamma(x)$   using
\begin{equation*} 
	2 \, \M(d)  \lceil\log (N+1)\rceil + \M(d)
\end{equation*} 
arithmetic operations in $R$. 
\end{thm}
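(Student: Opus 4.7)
The plan is to obtain $x^N \bmod \Gamma$ by transposing the LSB-first algorithm that underlies Theorem~\ref{thm:main}. Fix $\Gamma\in R[x]$ of degree $d$ and the exponent $N$. By Fiduccia's identity \eqref{eq:fiduccia}, the map
\[
L \colon R^d \longrightarrow R, \qquad v_0 \longmapsto \langle x^N \bmod \Gamma,\; v_0\rangle = u_N
\]
is $R$-linear in the vector of initial conditions $v_0$, and Algorithm~\ref{alg:base} realizes it within $2\,\M(d)\lceil\log(N+1)\rceil + \M(d)$ ring operations. Its transpose
\[
L^T \colon R \longrightarrow R^d, \qquad c \longmapsto c\cdot(x^N \bmod \Gamma)
\]
evaluated at $c=1$ produces precisely the coefficient vector of $x^N \bmod \Gamma$. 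So it suffices to build a straight-line program computing $L^T$ of essentially the same cost as the one computing $L$.

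For this, I would invoke Tellegen's transposition principle: any straight-line program for an $R$-linear map $V\to W$ of arithmetic cost $C$ can be rewritten mechanically into one for the transposed map $W^\ast\to V^\ast$ of cost at most $C + \dim W - \dim V$. Applied to Algorithm~\ref{alg:base}, regarding $\Gamma$, $N$, and all data depending only on them as external parameters, and $v_0$ as the sole linear input, this produces an MSB-first routine (Algorithm~\ref{alg:modexp}) that takes $c=1$ as input and outputs the coefficients of $x^N \bmod \Gamma$. Since $L$ has output dimension $1$ and input dimension $d$, Tellegen introduces at most $O(d)$ extra additions, which can be absorbed into the $\M(d)$ term in view of the standing assumption $\M(d)\ge d$.

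The main technical step will be to check block by block that the building blocks of Algorithm~\ref{alg:base} transpose with no loss of efficiency. The only delicate piece is multiplication by a fixed polynomial of degree at most $d$: its transpose is the middle product, which is known to have the same arithmetic cost $\M(d)+O(d)$. Once this dictionary is in place, reading off the transposed straight-line program and counting its operations yields the bound $2\,\M(d)\lceil\log(N+1)\rceil + \M(d)$ directly, inherited verbatim from the bound in Theorem~\ref{thm:main}. The only real obstacle is therefore the careful cost bookkeeping through the recursive structure of Algorithm~\ref{alg:base}, but no new algorithmic idea beyond Theorem~\ref{thm:main} and Tellegen's principle is required.
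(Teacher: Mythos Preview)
Your approach is correct and coincides with the paper's in spirit: both obtain $x^N \bmod \Gamma$ by transposing the LSB-first Algorithm~\ref{alg:base}. The paper, however, chooses to carry out the transposition explicitly rather than invoke Tellegen's principle as a black box. It first derives by hand the MSB-first Algorithm~\ref{alg:baset}, which computes the slice $\mathcal{F}_{N,d}(1/Q)$ of $d$ consecutive coefficients of $1/Q$ (this is precisely the transpose of the inner loop of Algorithm~\ref{alg:base}, with the middle product replacing the product, as you anticipate), and then proves Lemma~\ref{lem:power}, showing that one further multiplication of cost $\M(d)$ converts this slice into $x^N \bmod \Gamma$ (this is the transpose of Step~2 of Algorithm~\ref{alg:base-rec}). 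Unwinding your abstract Tellegen argument would reproduce exactly these two pieces. The explicit route buys the paper Lemma~\ref{lem:power} as a standalone statement, establishing that modular exponentiation and computing a slice of a rational power series are computationally equivalent; your route is shorter but leaves this structural fact implicit.
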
     

This cost of $\sim 2\, \M(d) \log N$ compares favorably with the currently
best estimate of $\sim 3\, \M(d) \log N$ obtained by square-and-multiply in
the quotient ring $R[x]/(\Gamma(x))$, combined with fast modular
multiplications performed either classically~\cite[Corollary~9.7]{GaGe13}, or
using Montgomery's algorithm~\cite{Montgomery85}. In the FFT setting, the gain
is even larger, and our results improve on the best estimates, due to
Mih\u{a}ilescu~\cite{Mihailescu08}.


\subsection{Structure of the paper}

In Section~\ref{sec:lsb} we propose our LSB-first (least significant bit first) algorithm for computing the $N$-th term of a C-recursive sequence.
We design in Section~\ref{sec:msb} a second algorithm, 
which is an MSB-first (most significant bit first) variant, 
and discuss several algorithmic applications, including polynomial modular exponentiation,
powering of matrices and high-order lifting.
In Section~\ref{sec:FFT} we specialize and analyze Algorithm~\ref{alg:base} in the specific FFT
setting, where polynomial multiplication is based on Discrete Fourier
Transform techniques, and we compare it with the FFT-based Fiduccia's algorithm.  
We conclude in Section~\ref{sec:conclusion} by
a summary of results and plans of future work.
 

\section{The LSB-first algorithm and applications}  \label{sec:lsb}   

We will prove Theorem~\ref{thm:main} in \S\ref{sec:proof}, where we propose
the first main algorithms (Algorithms~\ref{alg:base} and~\ref{alg:base-rec}),
which are faster than Fiduccia's algorithm.
Then, in~\S\ref{sec:fibo} we instantiate them in the particular
case of the Fibonacci sequence. The resulting algorithm is
competitive with state-of-the-art algorithms.

\subsection{The LSB-first algorithm: Proof of Theorem~\ref{thm:main}}   \label{sec:proof}

The algorithm underlying Theorem~\ref{thm:main} is very natural.
Let us sketch now its main idea.

First, it is classical~\cite{Ranum11} that the generating functions of
linearly recurrent sequences are \emph{rational}.
As a consequence, computing terms of a
linearly recurrent sequence is equivalent to computing coefficients in the power
series expansion (at the origin) of a rational function.         
More precisely, let us attach to the recurrence~\eqref{eq:rec} the polynomial
$Q(x):=1 - c_{d-1} x - \cdots - c_0 x^d$, that is the reversal of the
characteristic polynomial $\Gamma(x)= x^d-\sum_{i=0}^{d-1} c_i x^i$ of recurrence~\eqref{eq:rec}. 
Let us denote by~$F(x)$ the generating function of the sequence~$(u_n)_{n \geq 0}$,
	\[ F(x) := \sum_{n \geq 0} u_n x^n.\]
Then, there exists a polynomial $P(x)$ in $R[x]$ of degree less than $d$ such
that $F(x) = P(x)/Q(x)$ in $R[[x]]$. This is immediately seen by checking
that, for any $n \geq 0$, the coefficient of $x^{n+d}$ in the power series
$P(x) := Q(x) \cdot F(x)$ is equal to $ u_{n+d} - c_{d-1} u_{n+d-1} - \cdots - c_0
u_n$, hence it is zero by~\eqref{eq:rec}, and therefore the power series $P(x)$ is in fact a polynomial of degree less than~$d$. Moreover, the coefficients of $P(x)$ 
can be determined from the recurrence~\eqref{eq:rec} and from the initial terms $u_0, \ldots, u_{d-1}$ by using $\M(d)$ operations in~$R$.    

We are thus reduced to the question of determining the $N$-th
coefficient~$u_N$ of the rational power series $F(x) = P(x)/Q(x)$. Our new algorithm is
based on the following observation. The polynomial $Q(x) Q(-x)$ is even, so it writes $V(x^2)$ for some $V\in R[x]$ of degree $d$. Then,
denoting by $U(x)$ the polynomial $P(x) Q(-x)$, of degree less than $2d$, and
by $U_{\rm e}$ and $U_{\rm o}$ the even and the odd parts of $U$, that is
$U(x) = U_{\rm e}(x^2) + x \cdot U_{\rm o}(x^2)$, we have          
\[
\frac{P(x)}{Q(x)} 
= \frac{P(x) Q(-x)}{Q(x) Q(-x)} 
=  \frac{U_{\rm e}(x^2)}{V(x^2)} + x \cdot \frac{U_{\rm o}(x^2)}{V(x^2)}, 
\]                                                                        
which implies that the $N$-th coefficient in the series expansion of $P/Q$ is 
\[
 [x^N]\; \frac{P(x)}{Q(x)} = 
\begin{cases}
     [x^{\frac{N}{2}}] \; \frac{U_{\rm e}(x)}{V(x)}   ,&\text{if $N$ is even,}\\
[x^{\frac{N-1}{2}}] \; \frac{U_{\rm o}(x)}{V(x)} ,&\text{else.}\end{cases}
\] 
In other words, the problem of computing the $N$-th term of a rational
function $P/Q$ of degree $d$ is reduced to that of computing the term of
index~$\lfloor N/2 \rfloor$ of another rational function of degree $d$, that
can be deduced from $P/Q$ by using two polynomial multiplications in degree
$d$. The desired coefficient $u_N$ is computed after repeating this process at
most $\lceil\log (N+1)\rceil$ times, and the complexity estimate in
Theorem~\ref{thm:main} is easily deduced.

\smallskip

Notice that, by the duality between linearly recurrent sequences and rational
functions, the new algorithm admits a nice and simple interpretation directly
at the level of recurrences. The sequence $(u_n)_{n \geq 0}$ is determined by
the recurrence~\eqref{eq:rec} (encoded by the denominator $Q(x)$) and by the
initial conditions $u_0, \ldots, u_{d-1}$ (encoded by the numerator $P(x)$).
To compute the $N$-th coefficient $u_N$, the new method builds a different
recurrence (encoded by $V(x) = Q(\sqrt{x}) Q(-\sqrt{x})$) still with constant
coefficients and of the same order~$d$, together with new initial conditions
(encoded by $P(x) Q(-x)$)\footnote{In fact, the new sequence consists of the
even (or odd) terms of the original sequence.}.
Computing $u_N$ is thus reduced to computing the term of index~$\lfloor N/2
\rfloor$ of the new sequence; and this reduction is applied  at
most $\lceil\log (N+1)\rceil$ times.

\begin{algorithm}[t]
{\bf Assumptions:} $Q(0)$ invertible and $\deg(P) < \deg(Q) =: d$
\begin{algorithmic}[1]    
\While{$N \ge 1$}
\State $U(x) \gets P(x)Q(-x)$    
\Comment{$U=\sum_{i=0}^{2d-1} U_i x^i$}
\If{$N$ is even}
\State $P(x) \gets \sum_{i=0}^{d-1} U_{2i} x^i$
\Else
\State $P(x) \gets \sum_{i=0}^{d-1} U_{2i+1} x^i$
\EndIf
\State $A(x) \gets Q(x)Q(-x)$
\Comment{$A=\sum_{i=0}^{2d} A_i x^i$}
\State $Q(x) \gets \sum_{i=0}^{d} A_{2i} x^i$
\State $N \gets \lfloor N/2\rfloor$
\EndWhile
\State \Return $P(0)/Q(0)$
\end{algorithmic}
\caption{(\textcolor{red}{\textsf{OneCoeff}}) \quad {\bf Input}: $P(x)$, $Q(x)$, $N$\qquad {\bf Output}: $[x^N]\, \frac{P(x)}{Q(x)}$}
\label{alg:base}
\end{algorithm}

\begin{algorithm}[t]
{\bf Assumptions:} 
$\Gamma(x) = x^d-\sum_{i=0}^{d-1} c_i x^i$ with  $c_0 \neq 0$
\begin{algorithmic}[1]    
\State $Q(x) \gets x^d \Gamma(1/x)$    
\State $P(x) \gets (u_0 + \cdots + u_{d-1} x^{d-1}) \cdot Q(x) \bmod \, x^d$    
\State \Return $[x^N] P(x)/Q(x)$
\Comment using Algorithm~\ref{alg:base}    
\end{algorithmic}
\caption{(\textcolor{red}{\textsf{OneTerm}}) \quad {\bf Input}: rec.~\eqref{eq:rec}, $u_0,\ldots,u_{d-1}$, $N$\quad {\bf Output}: $u_N$}
\label{alg:base-rec}
\end{algorithm}            

\smallskip

The proposed algorithm for computing $[x^N] P(x)/Q(x)$ is summarized in
Algorithm~\ref{alg:base}, and its immediate consequence for computing the
$N$-th term of the linearly recurrent sequence $(u_n)_{n \geq 0}$ defined
by~eq.~\eqref{eq:rec} is displayed in Algorithm~\ref{alg:base-rec}.
Algorithm~\ref{alg:base} has complexity $2 \, \M(d) \lceil\log (N+1)\rceil$
and Algorithm~\ref{alg:base-rec} has complexity $2 \, \M(d) \lceil\log (N+1)\rceil + \M(d)$, which proves in Theorem~\ref{thm:main}.

\smallskip

Note that Algorithms~\ref{alg:base} and~\ref{alg:base-rec} use an idea similar to the ones in~\cite{BoChDu16,BoCaChDu19} which were dedicated to the larger class of {algebraic} power series, but restricted to {positive characteristic} only.   
Algorithm~\ref{alg:base} also shares common features with the technique of \emph{section operators}~\cite[Lemma~4.1]{AlSh92} used by Allouche and Shallit to compute the $N$-th term of $k$-regular sequences~\cite[Corollary~4.5]{AlSh92} in $O(\log N)$ ring operations.

Algorithm~\ref{alg:base} can be interpreted at the level of recurrences as computing
$\sim \log N$ new recurrences produced by the Graeffe process, which is a
classical technique to compute the largest root of a real polynomial~\cite{Householder59,Pan87,Pan97}.    
Interestingly, the Graeffe process has been used in a purely algebraic context 
by Sch\"{o}nhage in~\cite[\S 3]{Schonhage00} for computing the reciprocal of a power series, see also~\cite[\S 2]{CDJPS07}.
However, our paper seems to be the first reference where the Graeffe process
and the section operators approach are combined together.   

\subsection{New algorithm for Fibonacci numbers} \label{sec:fibo}

\begin{algorithm}[ht]
{\bf Assumptions:} $N \geq 2$
\begin{algorithmic}[1]
\State $c \gets 3$    
\If{$N$ is even}
\State $[a,b] \gets [0,1]$
\Else
\State $[a,b] \gets [1,-1]$   
\EndIf
\State $N \gets \lfloor N/2\rfloor$
\While{$N > 1$}
\If{$N$ is even}
\State $b \gets a+b\cdot c$
\Else
\State $a \gets b+a\cdot c$ 
\EndIf
\State $c \gets c^2-2$
\State $N \gets \lfloor N/2\rfloor$
\EndWhile                   \State \Return $b+a \cdot c$
\end{algorithmic}
\caption{(\textcolor{red}{\textsf{NewFibonacci}}) \quad {\bf Input}: $N$\qquad {\bf Output}: $F_N$}
\label{alg:newfibo}
\end{algorithm}   

\begin{algorithm}[ht]
{\bf Assumptions:} $N \geq 2$ and $N$ is a power of $2$
\begin{algorithmic}[1]
\State $[b,c] \gets [1,3]$    
\State $N \gets \lfloor N/2\rfloor$
\While{$N > 2$}
\State $b \gets b\cdot c$
\State $c \gets c^2-2$
\State $N \gets \lfloor N/2\rfloor$
\EndWhile                   \State \Return $b \cdot c$
\end{algorithmic}
\caption{\quad  {\bf Input}: $N$\qquad {\bf Output}: $F_N$}
\label{alg:newfibo-powerof2}
\end{algorithm}

To illustrate the mechanism of Algorithm~\ref{alg:base}, let us instantiate it
in the particular case of the Fibonacci sequence defined by
\[ F_0 = 0, F_1 = 1, \quad F_{n+2} = F_{n+1} + F_n, \; n \geq 0.\]
The generating function $\sum_{n \geq 0 } F_n x^n$ equals $x/(1-x-x^2)$.     
Therefore, the coefficient $F_N$ is equal to 
\[
[x^{N}]\; \frac{x}{1-x-x^2} 
= [x^{N}]\; \frac{x(1+x-x^2)}{1-3x^2+x^4}
= 
\begin{cases}
     [x^{\frac{N}{2}}] \; \frac{x}{1-3x+x^2}   ,&\text{if $N$ is even,}\\
[x^{\frac{N-1}{2}}] \; \frac{1-x}{1-3x+x^2} ,&\text{else.}\end{cases}    
\]
The computation of $F_N$ is reduced to that of a coefficient of the form
\[
[x^{N}]\; \frac{a+bx}{1-cx+x^2} 
= [x^{N}]\; \frac{(a+bx)(1+cx+x^2)}{1-(c^2-2)x^2+x^4}
= 
\begin{cases}
     [x^{\frac{N}{2}}] \; \frac{a+(bc+a)x}{1-(c^2-2)x+x^2}   ,&\text{if $N$ is even,}\\
[x^{\frac{N-1}{2}}] \; \frac{(ac+b)+bx}{1-(c^2-2)x+x^2} ,&\text{else.}\end{cases}    
\] 
This yields Algorithm~\ref{alg:newfibo} for the computation of $F_N$\footnote{Notice that the same algorithm can be used to compute efficiently the 
$N$-th Fibonacci polynomial, or the $N$-th Chebychev polynomial.
Fibonacci polynomials in $R[t]$ are defined by $F_{n+2}(t) = t \cdot F_{n+1}(t) + F_n(t)$ with $F_0(t)=1$ and $F_1(t)=1$. It is sufficient to initialize
$c$ to $t^2+2$ (instead of 3) and $b$ to $t$ when $N$ is even (instead of $0$). The complexity of this algorithm is $O(\M(N))$ operations in~$R$, which is quasi-optimal.}.     
A close inspection reveals that this algorithm computes $F_N$ by a recursive use of the formula
\[
F_N =  L_{2^{\lfloor \log N \rfloor}} \cdot F_{N - 2^{\lfloor \log N \rfloor}} + (-1)^N \cdot F_{2^{1 + \lfloor \log N \rfloor}-N}, 
\]                                                  
which is a particular instance of the classical formula
       \[F_{n+m} =  L_m  F_n  + (-1)^n F_{m-n} \]
relating the Fibonacci numbers and the Lucas numbers $L_n = F_{n+1} + F_{n-1}$.

\smallskip When $N$ is a power of 2, then Algorithm~\ref{alg:newfibo} 
degenerates into Algorithm~\ref{alg:newfibo-powerof2}.  
This is equivalent to Algorithm \texttt{fib}$(n)$ in~\cite[Fig.~6]{CuHa89}\footnote{This algorithm had also appeared before, in Knuth's book~\cite[p.~552, second solution]{Knuth69}.}. 
It uses $2 \log(N) - 3$ products (of which $\log(N) - 2$ are squarings)  and $\log(N) - 2$ subtractions.

When $N$ is arbitrary, Algorithm~\ref{alg:newfibo} has essentially the same cost: it uses at most  
$2 \log(N) - 1$ products (of which at most $\log(N) - 1$ are squarings) 
and 
$2 \lfloor \log(N) \rfloor - 1$ additions/subtractions.
In contrast,  \cite[Fig.~6]{CuHa89} uses a more complex algorithm,
with higher cost. 
An example of execution of our Algorithm~\ref{alg:newfibo} for computing $F_{43}$ is 
explicitly displayed in Table~\ref{tab:F43-bis}.

A nice feature of Algorithm~\ref{alg:newfibo} is not only that it is simple
and natural, but also that its arithmetic and bit complexity matches the
complexity of the state-of-art algorithms for computing Fibonacci
numbers~\cite{Takahashi00}.

      \begin{table}[hb]
  \centering
\begin{tabular}{|c|c|c|c|}
 $N$ & $a$ & $b$ & $c$\\
  \hline
  $21$ & $1$ & $-1$ & $3$ \\
  $10$ & $ 1 \times 3 -1 = 2$ &  & $3^2-2=7$\\
  $5$ & & $(-1)\times 7 + 2 = -5$ & $7^2-2=47$\\
  $2$ & $2 \times 47 - 5= 89 $ & & $47^2-2=2207$\\
  $1$ &  & $(-5) \times 2207 + 89 $ & $2207^2-2=4870847$\\
	& &= $-10946$  & \\
  $0$ & $89 \times 4870847 -10946$  &  & \\
     & $=  433494437 $  & & \\
\end{tabular}
\caption{Computation of $F_{43} = 433494437$ using the new algorithm.}  
\label{tab:F43-bis}
\end{table}


\section{The MSB-first algorithm and applications}  \label{sec:msb}   

We present in \S\ref{ssec:msb-first} a ``most significant bit'' (MSB) variant
(Algoritm~\ref{alg:baset}) of Algoritm~\ref{alg:base}. Then we discuss various
applications of Algorithms~\ref{alg:base} and~\ref{alg:baset}.
In~\S\ref{sec:modexp} we design a faster algorithm for polynomial modular
exponentiation, that we use in~\S\ref{sec:newFiduccia} to design a faster
Fiduccia-like algorithm for computing a slice of $d$ terms of indices $N-d+1,
\ldots, N$ in $\sim 2 \, \M(d) \log N$ operations.


\subsection{The MSB-first algorithm}           \label{ssec:msb-first}

In Fiduccia's algorithm (\S\ref{sec:fiduccia}), the $N$-th coefficient $u_N$
in the power series expansion $\sum_{i\geq 0} u_i x^i$ of $P/Q$ is given by
the inner product $\langle x^N\bmod \Gamma (x), v_0\rangle$, where $\Gamma$ is
the reversal polynomial of~$Q$ and $v_0$ is the vector of initial coefficients
$\begin{bmatrix} u_{0} & \cdots & u_{d-1} \end{bmatrix}$. Here, $x^N \bmod
\Gamma (x)$ depends only on the linear recurrence equation~\eqref{eq:rec}, and
is independent of the initial terms~$v_0$. Hence, if we want to compute the
$N$-th terms of $k$ different linearly recurrent sequences that share the same
linear recurrence equation~\eqref{eq:rec}, we can first determine $\rho(x):=
x^N\bmod \Gamma (x)$, and then $\langle \rho, v_0^{(i)}\rangle$ for
$i=1,\dotsc,k$, where~$v_0^{(i)}$ denotes the vector of $d$ initial terms of
the $i$-th sequence. The total arithmetic complexity of this algorithm is
$O(\M(d)\log N + kd)$; this is faster than Fiduccia's algorithm repeated
independently $k$ times, with cost $O(k \, \M(d)\log N)$.

On the other hand, in Algorithm~\ref{alg:base}, we iteratively update both the
denominator and the numerator, and each new numerator depends on the original
numerator $P(x)$ which encodes the initial $d$ terms of the sequence. Hence,
it is not a priori clear how to obtain with Algorithm~\ref{alg:base} the good
feature of Fiduccia's algorithm mentioned above.

In this section, we present an algorithm that computes $u_N$ with arithmetic
complexity equal to that of Algorithm~\ref{alg:base} and which, in addition,
achieves the cost $O(\M(d)\log N + kd)$ for the above problem with $k$
sequences.

While Algorithm~\ref{alg:base} looks at $N$ from the least significant bit
(LSB), the main algorithm presented in this section (Algoritm~\ref{alg:baset})
looks at $N$ from the most significant bit (MSB). In fact,
Algoritm~\ref{alg:baset} is in essentially equivalent to ``the transposition''
of Algorithm~\ref{alg:base} in the sense of~\cite{BoLeSc03}. This is the
reason why this MSB-first algorithm has exactly the same complexity as
Algorithm~\ref{alg:base}. However, in order to keep the presentation
self-contained, we are not going to appeal here to the general machinery of
algorithmic transposition tools, but rather derive the transposed algorithm
``by hand'', using a direct reasoning.

\begin{algorithm}[t]
{\bf Assumptions:} $Q(0)$ invertible and $\deg(Q) =: d$
\begin{algorithmic}[1]
\Function{$\mathcal{F}$}{$N$, $Q(x)$}
\If{$N=0$}
\State \Return $x^{d-1}/Q(0)$
\EndIf
\State $A(x) \gets Q(x)Q(-x)$
\Comment{$A=\sum_{i=0}^{2d} A_i x^i$}
\State $V(x) \gets \sum_{i=0}^d A_{2i} x^i$
\State $W(x) \gets \mathcal{F}(\lfloor N/2\rfloor, V(x))$
\If{$N$ is even}
\State $S(x) \gets x W(x^2)$
\Else
\State $S(x) \gets W(x^2)$
\EndIf
\State $B(x) \gets Q(-x) S(x)$
\Comment{$B=\sum_{i=0}^{3d-1} B_i x^i$}
\State \Return $\sum_{i=0}^{d-1} B_{d+i} x^i$
\EndFunction
\end{algorithmic}
\caption{(\textcolor{red}{\textsf{SliceCoeff}}) \quad {\bf Input}: $Q(x)$, $N$\qquad {\bf Output}: $\mathcal{F}_{N,d}(1/Q(x))$}
\label{alg:baset}
\end{algorithm}

In order to compute the coefficient $[x^N]\, P(x)/Q(x)$, it is sufficient to
compute the $(N-d+1)$-th term to the $N$-th term of $1/Q(x)$ since the degree
of $P(x)$ is at most $d-1$.
Let $\mathcal{F}_{N,d}(\sum_{i\ge 0} a_i x^i) := \sum_{i=0}^{d-1} a_{N-d+1+i} x^i$. Our goal is to compute $\mathcal{F}_{N,d}(1/Q(x))$.
We have the sequence of equalities
\begin{align*}
    \mathcal{F}_{N,d}\left(\frac1{Q(x)}\right) &= \mathcal{F}_{N,d}\left(\frac{Q(-x)}{Q(x)Q(-x)}\right)\\
    &= \mathcal{F}_{N,d}\left(Q(-x)x^{N-2d+1}\mathcal{F}_{N,2d}\left(\frac1{Q(x)Q(-x)}\right)\right)\\
    &= \mathcal{F}_{2d-1,d}\left(Q(-x)\mathcal{F}_{N,2d}\left(\frac1{V(x^2)}\right)\right),
\end{align*}
where $V(x^2) := Q(x)Q(-x)$. 

In the second equality, we ignore the terms of $1/V(x^2)$ except for the
$(N-2d+1)$-th term to the $N$-th term. In the third equality, we use the fact
that $\mathcal{F}_{N,d}(xA(x)) = \mathcal{F}_{N-1,d}(A(x))$.
Let now $W(x) := \mathcal{F}_{\lfloor N/2\rfloor,d}(1/V(x))$.
Then, it is easy to see that
\begin{align*}
    \mathcal{F}_{N,d}\left(\frac1{Q(x)}\right) &= \mathcal{F}_{2d-1,d}\left(Q(-x)S(x)\right),
\end{align*}
where
\begin{align*}
    S(x) &:=
    \begin{cases}
    x W(x^2),& \text{if $N$ is even}\\
    W(x^2),& \text{else.}
    \end{cases}
\end{align*}
The resulting method for computing $\mathcal{F}_{N,d}(1/Q(x))$ is summarized
in Algorithm~\ref{alg:baset}, and its immediate applications to the computation
of $[x^N] P/Q$, and to  $[x^N] P^{(i)}/Q$ for several $i=1,\ldots,k$, are displayed in Algorithms~\ref{alg:baset-bis} and~\ref{alg:baset-ter}.

\begin{algorithm}[t]
{\bf Assumptions:} $Q(0)$ invertible and $\deg(P) < \deg(Q) =: d$
\begin{algorithmic}[1]    
\State $U\gets \mathcal{F}_{N,d}(1/Q(x))$ using Algorithm~\ref{alg:baset}   
\Comment{$U = u_{N-d+1} + \cdots + u_N x^{d-1}$}  
\State \Return $p_0 u_N + \cdots + p_{d-1} u_{N-d+1}$
\Comment{$P=\sum_{i=0}^{d-1} p_i x^i$}  
\end{algorithmic}
\caption{(\textcolor{red}{\textsf{OneCoeffT}}) \quad {\bf Input}: $P(x)$, $Q(x)$, $N$\qquad {\bf Output}: $[x^N]\, \frac{P(x)}{Q(x)}$}
\label{alg:baset-bis}
\end{algorithm}

\begin{algorithm}[t]
{\bf Assumptions:} $Q(0)$ invertible and $\deg(P) < \deg(Q) =: d$
\begin{algorithmic}[1]    
\State $U\gets \mathcal{F}_{N,d}(1/Q(x))$ using Algorithm~\ref{alg:baset}   
\Comment{$U = u_{N-d+1} + \cdots + u_N x^{d-1}$}  
\State \Return $p_0^{(j)} u_N + \cdots + p_{d-1}^{(j)} u_{N-d+1}, j=1,\ldots,k$
\Comment{$P_j=\sum_{i=0}^{d-1} p_i^{(j)} x^i$}  
\end{algorithmic}
\caption{\quad {\bf Input}: $P_1, \ldots, P_k$, $Q$, $N$\quad {\bf Output}: $[x^N]\, \frac{P_j}{Q}, j=1,\ldots,k$}
\label{alg:baset-ter}
\end{algorithm}    

Let us analyze the complexity of Algorithms~\ref{alg:baset}
and~\ref{alg:baset-bis} more carefully. At each step,
Algorithm~\ref{alg:baset} computes $Q(x)Q(-x)$ and $Q(-x)S(x)$, where the
degrees of $Q(x)$ and $S(x)$ are $d$ and at most $2d-1$, respectively. Hence a
direct analysis concludes that its complexity is $3 \, \M(d) \log N$
operations in $R$. However, an improvement comes from the remark that not all
coefficients of $Q(-x)S(x)$ are needed: it is sufficient to compute the $d$-th
coefficient to the $(2d-1)$-th coefficient of $Q(-x)S(x)$. This operation is
known as ``the middle product'', and can be performed with exactly the same
arithmetic complexity as the standard product of two polynomials of degrees
$d$ and~$d-1$~\cite{HaQuZi04,BoLeSc03}. Therefore, if steps~11 and~12 of
Algorithm~\ref{alg:baset} are performed ``at once'' using a middle product,
then the arithmetic complexity drops to $2 \, \M(d) \log N$. This
complexity is also inherited by Algorithm~\ref{alg:baset-bis}, which uses at
most $2d$ additional operations in the last step.

It should be obvious at this point that the slight variant
Algorithm~\ref{alg:baset-ter} of Algorithm~\ref{alg:baset-bis} achieves
arithmetic complexity $O(\M(d)\log N + kd)$ for the aforementioned problem
with $k$ sequences, and more precisely its cost is of at most $(2 \, \M(d) +
d)\log N + 2kd$ operations in~$R$.

In conclusion, Algorithm~\ref{alg:baset} achieves the same arithmetic
complexity as Algorithm~\ref{alg:base}
and it extends to Algorithms~\ref{alg:baset-bis} and~\ref{alg:baset-ter}. All
algorithmic techniques specific to the FFT setting, that we will describe in
Section~\ref{sec:FFT}, can also be applied to Algorithms~\ref{alg:baset},
\ref{alg:baset-bis} and \ref{alg:baset-ter}, yielding the same
complexity gains.


\subsection{Faster modular exponentiation} \label{sec:modexp}

The algorithms of \S\ref{ssec:msb-first} are not only well-suited to compute
the $N$-th terms of several sequences satisfying the same recurrence relation.
In this section, we show that they also permit a surprising application to the
computation of \emph{polynomial modular exponentiations}. This fact has many
consequences, since modular exponentiation is a central algorithmic task in
algebraic computations. In~\S\ref{sec:newFiduccia}, we will discuss a first
application in relation with the main topic of our article. Namely, we will
design a new Fiduccia-style algorithm for the computation of the $N$-th term,
and actually of a whole slice of $k \geq d$ terms, in $2 \, \M(d) \log N +
O((k+d) \M(d)/d)$ arithmetic operations. More consequences will be separately
discussed in~\S\ref{sec:applications}.

\smallskip
Assume we are given a polynomial $\Gamma(x) \in R[x]$ of degree~$d$, an
integer~$N$, and that we want to compute $\rho(x):=x^N \bmod \Gamma(x)$.
Without loss of generality, we may assume $\Gamma(0) \neq 0$.
Let $Q(x) \in R[x]$ be the reversal of $\Gamma(x)$, that is $Q(x):=x^d \Gamma(1/x)$.
Let us denote the power series expansion of $1/Q$ by $\sum_{i \geq 0} a_i x^i$.   Then, equation~\eqref{eq:fiduccia} implies that   
\begin{equation} \label{sys:hankel}
\begin{bmatrix}
	a_{N} & \cdots & a_{N+d-1}
\end{bmatrix}      
=
{\bf r}
\times {\bf H},
\end{equation} 
where 
$
{\bf r} 
=
\begin{bmatrix}
	r_{0} & \cdots & r_{d-1}
\end{bmatrix}    
$
with $\rho=\sum_{i=0}^{d-1} r_i x^i$ 
and ${\bf H}$ is the Hankel matrix
 \[
{\bf H}
:=
\begin{bmatrix}
	a_{0} & \cdots & a_{d-1} \\
	a_{1} & \cdots & a_{d} \\
	       &   \vdots    &   \\
	a_{d-1} & \cdots & a_{2d-2} \\
\end{bmatrix}.  
\]  
Note that the matrix ${\bf H}$ is invertible, as its determinant is 
equal (up to a sign) to $([x^d] Q)^{d-1} = \Gamma(0)^{d-1}$.          
Therefore, ${\bf r}$ (and thus $\rho$) can be found by  

\begin{enumerate}
	\item[(1)] computing 
$ \begin{bmatrix}
	u_{N} & \cdots & u_{N+d-1}
\end{bmatrix}$ 
using Algorithm~\ref{alg:baset};

	\item[(2)] solving the Hankel linear system~\eqref{sys:hankel}.
\end{enumerate}
The arithmetic complexity of step (1) is $2 \, \M(d) \log (N)  + O(d \log N)$,
while step (2) has negligible cost $O(\M(d) \log d)$ using~\cite{BGY80}, see also~\cite[Ch.~2, \S5]{BiPa94}.    

\begin{algorithm}[t]
{\bf Assumptions:} ${\rm lc}(\Gamma)$ invertible, $\Gamma(0) \neq 0$ and $\deg(\Gamma) =: d$
\begin{algorithmic}[1]    
\State $Q(x) \gets x^d \Gamma(1/x)$    
\State $u(x) \gets \mathcal{F}_{N,d}(1/Q(x))$
\Comment using Algorithm~\ref{alg:baset}
\State $v(x) \gets u(x) Q(x) \bmod x^d$
\State \Return $v(1/x) x^{d-1}$
\end{algorithmic}
\caption{(\textcolor{red}{\textsf{NewModExp}}) \quad {\bf Input}: $\Gamma(x)$, $N$\qquad {\bf Output}: $x^N \bmod \Gamma(x)$}
\label{alg:modexp}
\end{algorithm}     

It is actually possible to improve a bit more on this algorithm, by using the
next lemma. 

\begin{lem}  \label{lem:power}
Let $N \in \N$ and let $\Gamma(x) \in R[x]$ be of degree~$d$ with $\Gamma(0) \neq 0$.
Let $Q(x) \in R[x]$ be its reversal, $Q(x):=x^d \Gamma(1/x)$.   
Denote its reciprocal $1/Q$ by $\sum_{i \geq 0} a_i x^i$, and let
$u(x)$ be $\mathcal{F}_{N,d}(1/Q(x))
= a_{N-d+1} + \cdots + a_Nx^{d-1}$.  
Define $v(x)$ to be $u(x) Q(x) \bmod x^d$.
Then  $x^N \bmod \Gamma(x) = v(1/x) x^{d-1}$.
\end{lem}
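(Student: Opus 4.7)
The plan is to derive a polynomial Bézout-like identity involving $Q$, $v$ and a truncation of $1/Q$, and then to reverse it via $x\mapsto 1/x$ to obtain the desired remainder modulo $\Gamma$.

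First, set $L(x) := \sum_{i=0}^{N-d} a_i\, x^i$, so that the degree-$N$ truncation of the power series $1/Q(x)$ decomposes as $S_N(x) := L(x) + x^{N-d+1}\, u(x)$. The identity $Q\cdot(1/Q) = 1$ yields $Q(x)\, S_N(x)\equiv 1 \pmod{x^{N+1}}$. Writing $Q(x)\, u(x) = v(x) + x^d\, w(x)$ for some polynomial $w$ (by the very definition of $v$), we have $x^{N-d+1} Q(x)\, u(x) = x^{N-d+1} v(x) + x^{N+1} w(x)$, so the second summand vanishes modulo $x^{N+1}$ and we get
\[ Q(x)\, L(x) + x^{N-d+1}\, v(x) \equiv 1 \pmod{x^{N+1}}. \]
A degree inspection gives $\deg(QL)\leq N$ and $\deg(x^{N-d+1} v)\leq N$, so both sides are polynomials of degree $\leq N$, and the congruence is in fact a polynomial identity
\[ Q(x)\, L(x) + x^{N-d+1}\, v(x) = 1. \]

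Next, substitute $x\mapsto 1/x$ and multiply through by $x^N$. Using the defining relation $x^d\, Q(1/x) = \Gamma(x)$, the first term becomes $\Gamma(x)\cdot \widehat{L}(x)$, where $\widehat{L}(x) := x^{N-d} L(1/x)$ is the reversal of $L$, a polynomial of degree $\leq N-d$. The second term becomes $x^{d-1}\, v(1/x)$, the reversal of $v$, a polynomial of degree $\leq d-1$. Altogether,
\[ \Gamma(x)\cdot \widehat{L}(x) + x^{d-1}\, v(1/x) = x^N. \]
Since $\deg\bigl(x^{d-1} v(1/x)\bigr) \leq d-1 < d = \deg\Gamma$, this displays $x^{d-1} v(1/x)$ as the remainder of $x^N$ on division by $\Gamma(x)$, which is the claim.

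The only subtle point is the passage from the truncated identity modulo $x^{N+1}$ to the on-the-nose polynomial identity $QL + x^{N-d+1} v = 1$; this relies on the degree bound above, and once it is in hand the reversal step is a purely formal manipulation of powers of $x$. Edge cases with $N < d-1$ can be handled either by the convention $a_i = 0$ for $i<0$ (so that $L$ degenerates appropriately) or by direct verification against the power series identity.
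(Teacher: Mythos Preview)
Your proof is correct and is essentially the same argument as the paper's, run in the opposite direction: the paper starts from the Euclidean division $x^N = L\,\Gamma + \rho$, reverses it to obtain $1 = L_{\mathrm{rev}}\,Q + x^{N-d+1}\tilde\rho$, and then reads off $u = \tilde\rho/Q \bmod x^d$ and hence $v=\tilde\rho$; you instead build the identity $Q L + x^{N-d+1} v = 1$ directly from the power-series truncation (your $L$ is precisely the paper's $L_{\mathrm{rev}}$) and reverse it to recover the Euclidean division. The key ingredients---the B\'ezout-type identity and the reversal $x\mapsto 1/x$---are identical in both arguments.
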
  

\begin{proof}
Write the Euclidean division $x^N = L(x) \cdot \Gamma(x) + \rho(x)$,
where $\deg(L) = N-d$ and $\rho(x) = r_0 + \cdots + r_{d-1} x^{d-1}$.
Replacing $x$ by $1/x$ on both sides, and then multiplying by $x^N$ yields
$1 = L_{\rm rev}(x) \cdot Q(x) + x^{N-d+1} \cdot \tilde{\rho}(x)$,
where $L_{\rm rev}(x) = x^{N-d} \cdot L(1/x)$ and  $\tilde{\rho}(x) = x^{d-1} 
\cdot \rho(1/x)$.
In other words
\[ \frac{1}{Q(x)} = L_{\rm rev}(x) +  x^{N-d+1} \cdot \frac{\tilde{\rho}(x)}{Q(x)}.\]
Since $L_{\rm rev}(x)$ has degree at most $N-d$, it follows that 
$u(x) = \frac{\tilde{\rho}(x)}{Q(x)} \bmod x^d$. Therefore, $\tilde{\rho}(x)$ is equal to 
$v(x)$, and the conclusion follows.
\end{proof}	  

The merit of Lemma~\ref{lem:power} is that it shows that computing $x^N \bmod
\Gamma(x)$ can be reduced to computing $\mathcal{F}_{N,d}(1/Q(x))$, plus a few
additional operations with negligible cost $\M(d)$. The resulting method is
presented in Algorithm~\ref{alg:modexp}, whose complexity is $2 \,\M(d) \log N
+ \M(d)$. This proves Theorem~\ref{thm:main-modexp}.

Notice that Algorithm~\ref{alg:modexp} is simpler, and faster by a factor of
1.5, than the classical algorithm based on binary powering in the quotient
ring $R[x]/(\Gamma(x))$. Algorithm~\ref{alg:modexp} admits a specialization
into the FFT setting, with complexity $\sim \frac23 \, \M(d) \log N$, in the
spirit of \S\ref{sec:FFT} below. Similarly to the case of
Algorithm~\ref{alg:FFT} in~\S\ref{sec:FFT}, the FFT variant of
Algorithm~\ref{alg:modexp} is faster by a factor of~$2.5$ than Shoup's
(comparatively simple) algorithm~\cite[\S7.3]{Shoup95}, and by a factor
of~$1.625$ than the (much more complex) algorithm of
Mih\u{a}ilescu~\cite{Mihailescu08}.

This speed-up might be beneficial for instance in applications to polynomial
factoring in $\F_p[x]$, where one time-consuming step to factor~$f \in
\F_p[x]$ is the computation of $x^p \bmod f$, see~\cite[Algorithms~14.3, 14.8,
14.13, 14.15, 14.31, 14.33 and 14.36]{GaGe13}, and
also~\cite{Shoup95,Lecerf10}.      

It might also be so in point-counting methods such as Schoof's algorithm and
the Schoof-Elkies-Atkin (SEA) algorithm~\cite[Ch.~VII]{BSS99}, the second one
being the best known method for counting the number of points of elliptic
curves defined over finite fields of large characteristic. Indeed, the bulks
of these algorithms are computations of $x^q$ modulo the ``division
polynomial''~$f_\ell(x)$ and of $x^q$ modulo the ``modular
polynomial''~$\Phi_\ell(x)$, where $\ell= O(\log(q))$ and $\deg(f_\ell) =
O(\ell^2)$, $\deg(\Phi_\ell) = O(\ell)$.

As a final remark, note that while Fiduccia's algorithm shows that computing
the terms of indices $N, \ldots, N+d-1$ of a linearly recurrent sequence of
order $d$ can be reduced to polynomial modular exponentiation ($x^N \bmod
\Gamma(x)$), Algorithm~\ref{alg:modexp} shows that the converse is also true:
polynomial modular exponentiation can be reduced to computing the terms of
indices $N, \ldots, N+d-1$ of a linearly recurrent sequence of order $d$.
Therefore, \emph{these two problems are computationally equivalent}. To our
knowledge, this important fact seems not to have been noticed before.

\subsection{A new Fiduccia-style algorithm} \label{sec:newFiduccia}

We conclude this section by discussing a straightforward application of Algorithm~\ref{alg:modexp}. This is based on the next equality,
generalizing~\eqref{sys:hankel} to any $k \geq 1$:   
\begin{equation} \label{sys:hankel-long}
\begin{bmatrix}
	u_{N} & \cdots & u_{N+k-1}
\end{bmatrix}      
=
{\bf r}
\times {\bf H}_k,
\end{equation} 
where as before
$
{\bf r} 
=
\begin{bmatrix}
	r_{0} & \cdots & r_{d-1}
\end{bmatrix}    
$
is the coefficients vector of $\rho=\sum_{i=0}^{d-1} r_i x^i$, 
with $\rho = x^N \bmod \Gamma(x)$,
and ${\bf H}_k$ is the Hankel matrix
 \[
{\bf H}_k
:=
\begin{bmatrix}
	u_{0} & \cdots & u_{d-1} & \cdots & \cdots & u_{k-1}\\
	u_{1} & \cdots & u_{d} & \cdots & \cdots & u_k \\ 
	       &   \vdots    & \vdots & \vdots & \vdots  \\
	u_{d-1} & \cdots & u_{2d-2} \cdots & \cdots & \cdots & u_{k+d-2}\\
\end{bmatrix}.  
\]     

The matrix ${\bf H}_k$ is built upon the first terms of the sequence $(u_n)_{n
\geq 0}$ satisfying recurrence~\eqref{eq:rec} with characteristic polynomial
$\Gamma = x^d-\sum_{i=0}^{d-1} c_i x^i$, or equivalently, from the power
series expansion of the rational function $P/Q$ with $Q(x) = x^d \,
\Gamma(1/x)$.

Note that the entries of ${\bf H}_k$ can be computed either from $P$ and $Q$,
or from the recurrence~\eqref{eq:rec} together with the initial terms $u_0,
\ldots, u_{d-1}$, using $O((k+d) \, \M(d)/d)$ arithmetic operations, by the
algorithm in~\cite[Thm.~3.1]{Shoup91}, see also~\cite[\S5]{BoLeSc03}. To
compute $u_N, \ldots, u_{N+k-1}$ it thus only remains to perform the
vector-matrix product~\eqref{sys:hankel-long}.

When $k=1$, the product ${\bf r} \times {\bf H}_1$ costs $2d$
operations and it yields the term~$u_N$.

\begin{algorithm}[t]
{\bf Assumptions:} 
$\Gamma(x) = x^d-\sum_{i=0}^{d-1} c_i x^i$ with  $c_0 \neq 0$
\begin{algorithmic}[1]    
\State $\rho(x) \gets x^N \bmod \Gamma(x)$    
\Comment using Algorithm~\ref{alg:modexp}    
\State $U(x) \gets u_0 + \cdots + u_{2d-2} x^{2d-2}$    
\Comment using Algorithm in~\cite[p.~18]{Shoup91}   
\State $V(x) \gets U(x) \cdot (x^d \cdot \rho(1/x))$    
\Comment{$V=\sum_{i=0}^{d-1} v_i x^i$}
\State \Return $[v_d, \ldots, v_{2d-1}]$
\end{algorithmic}
\caption{
\; {\bf Input}: rec.~\eqref{eq:rec}, $u_0,\ldots,u_{d-1}$, $N$\quad {\bf Output}: $u_N, \ldots, u_{N+d-1}$}
\label{alg:NewFiduccia}
\end{algorithm}    

When $k \geq d$, the product ${\bf r} \times {\bf H}_k$ can be reduced to the
polynomial multiplication of $r_{d-1} + \cdots + r_0 x^{d-1}$ by
$\sum_{i=0}^{k+d-2} u_i x^i$, and this can be performed using $\lceil
\frac{k+d}{d}\rceil \, \M(d)$ arithmetic operations. As a consequence, the
whole slice of coefficients $u_{N+i} = [x^{N+i}] P/Q$ for $i=0,\ldots,k-1$,  can be computed using Algorithm~\ref{alg:modexp} and
eq.~\eqref{sys:hankel-long} for a total cost of arithmetic operations equal to 
	\[2 \, \M(d)  \log N +
O\left(\frac{k+d}{d} \, \M(d) \right).\]
When $k=d$, this proves Theorem~\ref{thm:main-t}.
The corresponding algorithm is given as Algorithm~\ref{alg:NewFiduccia}.

We emphasize that this variant of Fiduccia's algorithm is different from Algorithm~\ref{alg:base}. It is actually a bit slower than Algorithm~\ref{alg:base}
when $k=1$. 
 However,  when $k>1$ terms are to be computed, it
should be preferred to repeating $k$ times Algorithm~\ref{alg:base}. 
It also compares favorably with Fiduccia's original algorithm, whose    
adaption to $k$ terms has arithmetic complexity
 \[ 3 \, \M(d) \log N + O\left( d \log(N) + \frac{k+d}{d} \, \M(d) \right).\]    


\subsection{Applications} \label{sec:applications}

In this section, we discuss three more applications of the MSB-first
algorithms (Algorithm~\ref{alg:baset} and~\ref{alg:modexp}) presented in
\S\ref{ssec:msb-first} and \S\ref{sec:modexp}. We deal with the case of
multiplicities (\S\ref{sec:mult}), and explain a new way to speed up
computations in that case. Then, we address other applications, to faster
powering of matrices (\S\ref{sec:powmat}) and to faster high-order lifting
(\S\ref{sec:HOL}). 

To simplify matters, we assume in this section that $R = \K$ is a field.

\subsubsection{The case with multiplicities} \label{sec:mult}

Hyun and his co-authors~\cite{HyMeSt19,HyMeScSt19} addressed the following
question: is it possible to compute faster the $N$-th term of a linearly
recurrent sequence when the characteristic polynomial of the recurrence has
multiple roots? By the Chinese Remainder Theorem, it is sufficient to focus on
the case where the characteristic polynomial is a pure power of a squarefree
polynomial. In other words, the main step of~\cite[Algorithm 1]{HyMeSt19} is
to compute $x^N \bmod Q$, where $Q=(Q^\star)^m$ and $Q^\star$ is the
squarefree part of $Q$. Under suitable invertibility conditions, the problem
is solved in~\cite{HyMeSt19,HyMeScSt19} in $O(\M(d^\star) \log N + \M(d) \log
d)$ operations in $\K$, where $d^\star = \deg (Q^\star)$ and $d=\deg(Q) = m \cdot d^\star$.
This cost is
obtained using an algorithm based on bivariate computations,  using the
isomorphisms between $\K[x]/(Q)$ and $\K[y,x]/(Q^\star(y),(x-y)^m)$ made
effective by the so-called \emph{tangling} / \emph{untangling} operations.
We now propose an alternatively fast, but simpler, algorithm with the
same cost.

Let us explain this on an example, for \textcolor{magenta}{\href{https://oeis.org/A037027}{``multiple-Fibonacci numbers''}}, that is
when $Q$ has the form $(Q^\star)^m$, with $Q^\star = 1-x-x^2$ and $d^\star=2, d=2 d^\star$. Assume we want
to compute the $N$-th coefficient $u_N$ in the power series expansion of
$(x/(1-x-x^2))^m$. The cost of Fiduccia's algorithm, and also of our new
algorithms, is $O(\M(m) \cdot \log N)$. Let us explain how we can lower this
to $O(\log N + \M(m) \log m)$.    
The starting point is the observation that, by the structure theorem of linearly recurrent sequences~\cite[\S A.(I)]{CeMiPi87} (see also~\cite[\S2]{Poorten89}), $u_N$ is of the form $u_m(N) \phi^N + v_m(N) \psi^N$, where $\phi$ and~$\psi$ are the two roots of $1+x=x^2$ and
$u_m, v_m$ are polynomials in $\overline{\K}[x]$ of degree less than~$m$. By an easy liner algebra argument, $u_N$ is thus equal to  $U_m(N) F_{N} + V_m(N) F_{N+1}$, where $U_m(x)$ and $V_m(x)$ are polynomials in $\K[x]$ of degree less than~$m$. These polynomials can be computed by (structured) linear algebra from the first $2m$ values of the sequence $(u_n)$, in complexity $O(\M(m) \log m)$.
For instance, when $d=2$, we have $U_2(x) = -(x+1)/5$ and $V_2(x) = 2 x/5$.
Once $U_m$ and $V_m$ are determined, it remains to compute $F_N$ and $F_{N+1}$ using Algorithm~\ref{alg:NewFiduccia} in $O(\log N)$ operations in~$\K$, then to return the value
$ U_m(N) \cdot F_N + V_m(N) \cdot F_{N+1}$.

The arguments extend to the general case and yields an algorithm of arithmetic complexity $2 \, \M(d^\star) \log N + O(\M(d) \log d)$.

\subsubsection{Faster powering of matrices} \label{sec:powmat}

Assume we are given a matrix $M \in \mathcal{M}_d(\K)$, an
integer~$N$, and that we want to compute the $N$-th power $M^N$ of $M$.                                                                

The arithmetic complexity of binary powering in $\mathcal{M}_d(\K)$ is
$O(d^\theta \log N)$ operations in~$\K$, where as before $\theta \in [2,3]$ is any
feasible exponent for matrix multiplication in $\mathcal{M}_d(\K)$.
A better algorithm consists in first computing the characteristic polynomial
$\Gamma(x)$ of the matrix~$M$, then the remainder $\rho(x):=x^N \bmod \Gamma(x)$,
and finally evaluating the polynomial $\rho(x)$ at~$M$. By the Cayley-Hamilton
theorem, $\rho(M) = M^N$. The most costly step is the computation of~$R$, which
can be done as explained in \S\ref{sec:modexp} using $\sim 2 \, \M(d) \log (N)$ operations in~$\K$. The cost of the other two
steps is independent of~$N$, and it is respectively $O(d^\theta \log d)$~\cite{Keller85} and
$O(d^{\theta +\frac12})$, this last cost being achieved using the
Paterson-Stockmeyer \emph{baby-step / giant-step} algorithm~\cite{PaSt73}.
The total cost of this algorithm is $2 \, \M(d) \log (N) + O(d^{\theta +\frac12})$.

Note that a faster variant (w.r.t.~$d$), of cost $2 \, \M(d) \log (N) +
O(d^{\theta} \log d)$, can be obtained
using~\cite[Corollary~7.4]{Giesbrecht95}. The corresponding algorithm is based
on the computation of the Frobenius (block-companion) form of the matrix~$M$,
followed by the powering of companion matrices, which again reduces to modular
exponentiation.                                                     

\subsubsection{Faster high-order lifting} \label{sec:HOL}   

The fastest known algorithms for polynomial linear algebra rely fundamentally
on an algorithmic technique introduced by
Storjohann~\cite{Storjohann02}, called \emph{high-order lifting}.

Given an invertible {polynomial matrix}~$A$ of degree~$d$, the problem is to compute the {high order components\/} $(C_{0},C_1), (C_2,C_3), (C_6,C_7), (C_{14},C_{15}), \ldots$ in the power series expansion of its inverse 
\[A^{-1} = \sum_{i \geq 0} C_i(x)  \cdot {(x^d)}^i, \qquad \text{with}\quad 
C_i \in \mathcal{M}_n \left(\K[x]_{<d} \right), \]
where $\mathcal{M}_n \left(\K[x]_{<d} \right) $ denotes the set of $n \times n$ matrices whose entries are polynomials in $\K[x]$ of degree less than~$d$.
                                                  
For instance, two extreme cases are ($i$) if $d=1$ and $A = I_n - xM$, with $M\in\mathcal{M}_n(\K)$, then $C_i = M^i$ and the high-order components can be computed as in~\S\ref{sec:powmat}; ($ii$) $n=1$, then the problem reduces to the one we solved in~\S\ref{ssec:msb-first}.   
Storjohann~\cite[\S5]{Storjohann02} proposed an algorithm for arbitrary $d$ and $n$, extrapolating between the two particular cases, with complexity 
$O(\MM(n,d) \, \log(N))$, where $\MM(n,d)$ denotes the arithmetic complexity 
of the product in $\mathcal{M}_n \left(\K[x]_{<d} \right)$.

Storjohann's algorithm relies on the following identities: 
\[\begin{cases}
	C_{2^i-2} & = \; - \; \left[ \; \big( C_{2^{i-1}-2} + C_{2^{i-1}-1} \cdot x^d \big) \cdot \big[ A\cdot C_{2^{i-1}-2} \big]_{d-1}^{2d-1} \; \right]_{d-1}^{2d-1}, \\                       \\
	C_{2^i-1} & = \; - \; \left[ \; \big( C_{2^{i-1}-2} + C_{2^{i-1}-1} \cdot x^d \big) \cdot \big[ A\cdot C_{2^{i-1}-1} \big]_{d-1}^{2d-1} \; \right]_{d-1}^{2d-1}.
\end{cases}\]       
Here, for a polynomial matrix $B=\sum_i B_i x^i$ with $B_i\in\mathcal{M}_n(\K)$, we use the notation $[B]_{d-1}^{2d-1}$ to denote the matrix $\sum_{i=0}^{d-1} B_{d+i} x^i$.

Thus, to compute $C_N$ say when $N$ is of the form $2^k-1$, this algorithm
uses     
$\sim 6 \, \MM(n,d) \log (N)$ operations in $\K$ if polynomial products are used, or  
$\sim 4 \, \MM(n,d) \log (N)$ operations in $\K$ if middle product techniques
are used for the outmost products.   
Using a matrix adaptation of Algorithm~\ref{alg:baset}, we can lower this to     
$\sim 2 \, \MM(n,d) \log (N)$ operations in $\K$.   

Note that Nuel and Dumas compared in~\cite{NuDu13} Fiduccia's and Storjohann's
algorithms (in the scalar case), but only in the under the specific assumption
that naive polynomial multiplication is used, that is $\M(d) = O(d^2)$.


\section{Analysis under the FFT multiplication model}\label{sec:FFT}

In this section, we specialize, optimize and analyze the generic
Algorithm~\ref{alg:base} to the FFT setting, in which polynomial
multiplications are assumed to be performed using the discrete Fourier
transform (DFT), and its inverse.

In order to do this, we will assume that the base ring $R$ possesses roots of
unity of sufficiently high order. To simplify the exposition, the ring $R$
will be supposed to be a field, but the arguments also apply without this
assumption, modulo some technical complications, see~\cite[\S 8.2]{GaGe13}.

\subsection{Discrete Fourier Transform for polynomial products}  \label{sec:DFTmul}
Let $\mathbb{K}$ be a field with a primitive $n$-th root $\omega_n$ of unity. Let $A \in \mathbb{K}[x]$ be a polynomial of degree at most $d\le n-1$.
The DFT $\widehat{A}$ of $A$ is defined by
\begin{align*}
    \widehat{A}_y := A(\omega_n^{-y}) = \sum_{i=0}^{n-1} A_i \omega_n^{-yi}\qquad \text{for } y=0,1,\dotsc, n-1.
\end{align*}
Here, $A_y = 0$ for $y > d$. 
It is classical that the DFT map is an invertible $\K$-linear transform from $\K^n$ to itself, and that the polynomial $A$ can be retrieved from its DFT $\widehat{A}$ using the formulas
\begin{align*}
    A_i = \frac1{n} \sum_{y=0}^{n-1} \widehat{A}_y \omega_n^{yi}\qquad \text{for } i=0,1,\dotsc, n-1.
\end{align*}

For computing the polynomial multiplication $C(x)=A(x)B(x)$ for given $A(x), B(x)\in\mathbb{K}[x]$ of degree at most $d$,
it is sufficient to compute the DFT of $C(x)$ for $n\ge 2d+1$.
Since $\widehat{C}_y = C(\omega_n^{-y}) = A(\omega_n^{-y})B(\omega_n^{-y}) = \widehat{A}_y\widehat{B}_y$, the polynomial $C(x)$ can be computed using two DFTs and one inverse DFT.

Let $\E(n)$ be an arithmetic complexity for computing a DFT of length $n$. Then the cost of polynomial multiplication in $\K[x]$ is governed by 
\[\M(d) = 3 \, \E(2d) + O(d).\] 

\subsection{Fast Fourier Transform}
In this subsection, we briefly recall the Fast Fourier Transform (FFT), which gives the quasi-linear estimate $\E(n)=O(n\log n)$.

Assume $n$ is even. Then,   for $y=0,1,\dotsc,n/2-1$ we have
\begin{align*}
    \widehat{A}_{y} &= \sum_{i=0}^{n/2-1} A_{2i} \omega_n^{-y(2i)} + \sum_{i=0}^{n/2-1} A_{2i+1} \omega_n^{-y(2i+1)}\\
                    &= \sum_{i=0}^{n/2-1} A_{2i} \omega_{n/2}^{-yi} + \omega_n^{-y}\cdot \sum_{i=0}^{n/2-1} A_{2i+1} \omega_{n/2}^{-yi}\\
                    &= \widehat{A^{\rm e}}_y + \omega_n^{-y} \widehat{A^{\rm o}}_y
\end{align*}
 where $A^{\rm e}(x):=\sum_{i=0}^{n/2} A_{2i} x^i$ and $A^{\rm o}(x):=\sum_{i=0}^{n/2} A_{2i+1} x^i$.

Similarly, we have  $\widehat{A}_{n/2+y} = \widehat{A}^e_y - \omega_n^{-y} \widehat{A}^o_y$.
We therefore obtain the following matrix equation
\begin{align}
\begin{bmatrix}
\widehat{A}_y\\
\widehat{A}_{n/2+y}\\
\end{bmatrix}
=
\begin{bmatrix}
1&1\\
1&-1
\end{bmatrix}                                                                                                                                                                                                                                
\begin{bmatrix}
1&0\\
0&\omega_N^{-y}
\end{bmatrix}
\begin{bmatrix}
\widehat{A}^{\mathrm{e}}_{y}\\
\widehat{A}^{\mathrm{o}}_{y}\\
\end{bmatrix}
\qquad\text{for } y=0,1,\dotsc,n/2-1.
\label{eq:FFT}
\end{align}    
Thus, computing a DFT in size $n$ reduces to two DFTs in size $n/2$. 
More precisely, $\E(n) \le 2 \, \E(n/2) + (3/2)n$. 
If $n$ is a power of two, $n=2^k$, and if the field~$\K$ contains a primitive $2^k$-th root of unity (as is the case for instance when $\K=\C$,
$\K=\mathbb{F}_p$ for a prime number $p$
satisfying $2^k\mid p-1$), this reduction can be repeated $k=\log n$ times, and it yields the estimate $\E(n) = \frac32 n\log n$.
The corresponding algorithm is called the \emph{decimation-in-time} Cooley--Tukey fast Fourier transform~\cite{CoTu65}, see also~\cite[\S2]{Bernstein08}.   

By the arguments of \S\ref{sec:DFTmul}, we conclude that polynomial multiplication in $\K[x]$ can be performed in arithmetic complexity
\[\M(d) = 9 \, d\log d + O(d).\]

\subsection{Efficiently doubling the length of a DFT}  \label{sec:doubling}
In the FFT setting, it is useful for many applications to compute efficiently      
a DFT of length $2n$ starting from DFT of length~$n$.

Assume $n\ge d+1$ and we have at our disposal the DFT $\widehat{A}$ of $A$, of
length~$n$. Assume that we want to compute the DFT $\widehat{A}^{(2n)}$ of
length $2n$.

The simplest algorithm is to apply the inverse DFT of length~$n$ to
obtain~$A$, and then to apply the DFT of length $2n$ to $A$. This costs $\E(n)+\E(2n)$
arithmetic operations, that is $\frac92 n \log n + 3n$ operations in $\K$.    

This algorithm can be improved using the following formulas
\begin{align*}
    \widehat{A}_{2y}^{(2n)} &= \sum_{i=0}^{2n-1} A_i \omega_{2n}^{-2yi} = \sum_{i=0}^{n-1} A_i \omega_{n}^{-yi} = \widehat{A}_y,\\
    \widehat{A}_{2y+1}^{(2n)} &= \sum_{i=0}^{2n-1} A_i \omega_{2n}^{-(2y+1)i} = \sum_{i=0}^{n-1} \omega_{2n}^{-i}A_i \omega_{n}^{-yi} = \widehat{B}_y,
\end{align*}
where $B_i:=\omega_{2n}^{-i}A_i$ for $i=0,1,\dotsc,n-1$. 
We obtain Algorithm~\ref{alg:up} with arithmetic complexity $ 2 \, \E(n) + n$, \emph{i.e.} $3 n \log n + n$,~\cite[\S12.8]{Bernstein08}, see also~\cite{Bernstein04,Mezzarobba10}\footnote{This ``FFT doubling'' trick is sometimes attributed to R.~Kramer (2004), but we have not been able to locate Kramer's paper.}.
Compared with the direct algorithm, the gain is roughly a factor of $3/2$.
\begin{algorithm}[t]
\begin{algorithmic}[1]
\Function {{\sf UP}}{$\widehat{A}$}
\State $A \gets \mathrm{IDFT}_{n}(\widehat{A})$
\State $B_i \gets \omega_{2n}^{-i} A_i$ \qquad for $i=0,1,\dotsc,n-1$
\State $\widehat{B}\gets \mathrm{DFT}_n(B)$
\State $\widehat{A}^{(2n)}_{2y} \gets \widehat{A}_y$ for $y = 0,1,\dotsc,n-1$
\State $\widehat{A}^{(2n)}_{2y+1} \gets \widehat{B}_y$ for $y = 0,1,\dotsc,n-1$
\State \Return $\widehat{A}^{(2n)}$
\EndFunction
\end{algorithmic}
\caption{Doubling the length of a DFT}
\label{alg:up}
\end{algorithm}

\subsection{Algorithm~\ref{alg:base} in the FFT setting}  
Recall that our main objective is, given $P,Q$ in $\K[x]$ with $d=\deg(Q) > \deg(P)$, to compute the $N$-th coefficient $u_N$ in the series expansion of~$P/Q$.

Let $k$ be the minimum integer satisfying $2^k\ge 2d+1$. Assume that there exists a primitive $2^k$-th root of unity in $\mathbb{K}$.
In this case, we can employ an FFT-based polynomial multiplication in $\K[x]$.
In each iteration of Algorithm~\ref{alg:base}, it is sufficient to compute $P(x)Q(-x)$ and $Q(x)Q(-x)$.
Here, only two FFTs and two inverse FFT of length $2^k$ are needed since $\widehat{Q}^-_y = \widehat{Q}_{\bar{y}}$ for $Q^-(x) := Q(-x)$ where $\bar{y}:=y+2^{k-1}$ if $y<2^{k-1}$ and $\bar{y}:=y-2^{k-1}$ if $y\ge 2^{k-1}$.
Hence, the arithmetic complexity $\SS(d)$ for a single step in 
Algorithm~\ref{alg:base} satisfies $\SS(d)\le 4 \, \E(2^k) + O(2^k)$.

In the following we will show the improved estimate 
\[\SS(d)\le 4 \, \E(2^{k-1}) + O(2^k) . \]

Before entering the \textbf{while} loop in Algorithm~\ref{alg:base}, the DFTs $\widehat{P}$ and $\widehat{Q}$ of $P(x)$ and $Q(x)$ of length $2^k$ are computed, respectively.
Inside the \textbf{while} loop, $\widehat{P}$ and $\widehat{Q}$ are updated.
The recursive formula~\eqref{eq:FFT} for the decimation-in-time Cooley--Tukey FFT is equivalent to
\begin{align*}
\begin{bmatrix}
\widehat{A}^{\mathrm{e}}_y\\
\widehat{A}^{\mathrm{o}}_y\\
\end{bmatrix}
=
\frac12
\begin{bmatrix}
1&0\\
0&\omega_N^{y}
\end{bmatrix}
\begin{bmatrix}
1&1\\
1&-1
\end{bmatrix}
\begin{bmatrix}
\widehat{A}_y\\
\widehat{A}_{2^{k-1}+y}\\
\end{bmatrix}
\qquad\text{for } y=0,1,\dotsc,2^{k-1}.
\end{align*}
By using this formula, $\widehat{A}^{\rm e}$ (or $\widehat{A}^{\rm o}$) can be computed with $O(2^k)$ operations from $\widehat{A}$.
By using Algorithm~\ref{alg:up}, we obtain the updated $\widehat{P}$ from $\widehat{A}^{\rm e}$ or $\widehat{A}^{\rm o}$.
The algorithm is summarized in Algorithm~\ref{alg:FFT}.
In each step, $\mathsf{UP}$ is called twice. Hence, the total arithmetic complexity of Algorithm~\ref{alg:FFT} is 
\[(4 \, \E(2^{k-1}) + O(2^k)) \cdot \log N.\]    

When $d$ is of the form $2^\ell-1$\footnote{In the general case, it might be useful to use the Truncated Fourier Transform (TFT), which
smoothes the ``jumps'' in complexity exhibited by FFT algorithms~\cite{Hoeven04,HaRo10,Arnold13}.}, then one can take $k=\ell+1$
and the cost simplifies to
\[\T(N,d)= 4 \, \E(d) \log N + O(d \log N),\]    
or, equivalently
\[\T(N,d)= 6\, d \log d \log N + O(d \log N).\]    
The (striking) conclusion of this analysis is that, in the FFT setting, our 
(variant of the) algorithm for computing the $N$-th term of $P/Q$ uses much less operations than in the general case, namely
\begin{equation} \label{conv:FFT}
	\T(N, d) \sim \frac23 \, \M(d) \log N,
\end{equation}
while for a generic multiplication algorithm the cost is $\sim 2 \, \M(d) \log N$.   This proves Theorem~\ref{thm:main-FFT}.

Note that the complexity bound~\eqref{conv:FFT} compares favorably with
Fidducia's algorithm combined with the best algorithms for modular squaring.
For instance, Shoup's algorithm~\cite[\S7.3]{Shoup95} computes one modular
squaring in the FFT setting using $\sim \frac53 \, \M(d)$ arithmetic
operations, while Mih\u{a}ilescu's algorithm~\cite[Table~1]{Mihailescu08}
(based on Montgomery's algorithm~\cite{Montgomery85}) uses roughly $\sim
\frac{13}{12} \, \M(d)$ arithmetic operations. Our bound~\eqref{conv:FFT} is
better by a factor of~$2.5$ than Shoup's (comparatively simple) algorithm, and
by a factor of~$1.625$ than the (much more complex) algorithm by
Mih\u{a}ilescu.

Let us point out that all the other algorithms admit similarly fast versions
in the FFT setting. We will however not give them in full detail here, mainly
for space reasons.

\begin{algorithm}[t]
\begin{algorithmic}[1]
\State $\widehat{P} \gets \mathrm{DFT}_{2^k}(P)$
\State $\widehat{Q} \gets \mathrm{DFT}_{2^k}(Q)$
\While{$N \ge 1$}
\State $\widehat{U}_y \gets \widehat{P}_y\,\widehat{Q}_{\bar{y}}$ for $y=0,1,\dotsc, 2^k-1$
\If{$N$ is even}
\State $\widehat{U}^{\rm e}_y \gets (\widehat{U}_y + \widehat{U}_{y + 2^{k-1}})/2$ for $y=0,1,\dotsc, 2^{k-1}-1$
\State $\widehat{P} \gets \mathsf{UP}(\widehat{U}^{\rm e}_y)$
\Else
\State $\widehat{U}^{\rm o}_y \gets \omega_N^y(\widehat{U}_y - \widehat{U}_{y + 2^{k-1}})/2$ for $y=0,1,\dotsc, 2^{k-1}-1$
\State $\widehat{P} \gets \mathsf{UP}(\widehat{U}^{\rm o}_y)$
\EndIf
\State $\widehat{A}_y \gets \widehat{Q}_y\,\widehat{Q}_{\bar{y}}$ for $y=0,1,\dotsc, 2^{k-1}-1$
\State $\widehat{Q} \gets \mathsf{UP}(\widehat{A})$
\State $N \gets \lfloor N/2\rfloor$
\EndWhile
\State $P(0) \gets \sum_{y=0}^{2^k-1} \widehat{P}_y$
\State $Q(0) \gets \sum_{y=0}^{2^k-1} \widehat{Q}_y$
\State \Return $P(0)/Q(0)$
\end{algorithmic}
\caption{(\textcolor{red}{\textsf{OneCoeff-FFT}}) \; {\bf Input}: $P(x)$, $Q(x)$, $N$\; {\bf Output}: $[x^N]\, \frac{P(x)}{Q(x)}$}
\label{alg:FFT}
\end{algorithm}

\section{Conclusion}         \label{sec:conclusion}

We have proposed several algorithmic contributions to the classical field of
linearly recurrent sequences.

Firstly, we have designed a simple and fast algorithm for computing the $N$-th
term of a linearly recurrent sequence of order~$d$, using $\sim 2 \, \M(d)
\log N$ arithmetic operations, which is faster by a factor of~$1.5$ than the
state-of-the-art 1985 algorithm due to Fiduccia~\cite{Fiduccia85}. When
combined with FFT techniques, the algorithm has even better arithmetic
complexity $\sim \frac23 \, \M(d) \log N$ which is faster than the fastest
variant of Fiduccia's algorithm in the FFT setting by a factor of~1.625. The
new algorithms are based on a new method (Algorithm~\ref{alg:base}) for
computing the $N$-th coefficient of a rational power series.

Secondly, using algorithmic transposition techniques, we have derived from
Algorithm~\ref{alg:base} a new method (Algorithm~\ref{alg:baset}) for
computing simultaneously the coefficients of indices $N-d+1,\ldots,N$ in the
power series expansion of the reciprocal of a degree-$d$ polynomial, using
again $\sim 2 \, \M(d) \log N$ arithmetic operations. Using
Algorithm~\ref{alg:baset}, we have designed a new algorithm for computing the
remainder of $x^N$ modulo a given polynomial of degree~$d$, using $\sim 2 \,
\M(d) \log N$ arithmetic operations as well. This is better by a factor of 1.5
than the previous best algorithm for modular exponentiation, with an even
better speed-up in the FFT setting, as for Algorithm~\ref{alg:base}. Combined
with the basic idea of Fiduccia's algorithm, our new algorithm for modular
exponentiation yields a faster Fiduccia-like algorithm (by the aforementioned
constant factors) that computes a slice of $d$ consecutive terms (of indices
$N-d+1,\ldots,N$) of a linearly recurrent sequence of order~$d$ using $\sim 2
\, \M(d) \log N$ arithmetic operations.

Thirdly, we have discussed applications of the new algorithms to a few other
algorithmic problems, including powering of matrices, high-order lifting (a
basic brick for modern polynomial linear algebra algorithms) and the
computation of terms of linearly recurrent sequences when the recurrence has
roots with (high) multiplicities.

\smallskip As future work, we plan to investigate further the full power of
our technique. To which extent can it be generalized to larger classes of
power series? For instance, although it perfectly works for bivariate rational
power series $U(x,y)$, the corresponding method does not directly provide a
$O(\log N)$-algorithm for computing the $(N,N)$-th coefficient $u_{N,N}$, the
reason being that the $\log N$ new bivariate recurrences produced by the
Graeffe process do not have constant orders, as in the univariate case. This
is disappointing, but after all not surprising, because the generating
function of the sequence $(u_{n,n})_n$ is known to be algebraic, but not
rational anymore~\cite{Polya22}. As of today, no algorithm is known for
computing the $N$-th coefficient of an algebraic power series faster than in
the P-recursive case (P), namely in a number of ring operations almost linear
in~$\sqrt{N}$.

\medskip\noindent {\bf Acknowledgements.} 
Our special thanks go to Kevin Atienza, whose  
editorial on \href{https://discuss.codechef.com/t/rng-editorial/10068}{https://discuss.codechef.com} was our initial source of inspiration, and to 
Sergey Yurkevich, for his careful reading of a first draft of this work.
A.~Bostan was supported in part by
 \textcolor{magenta}{\href{https://specfun.inria.fr/chyzak/DeRerumNatura/}{DeRerumNatura}} ANR-19-CE40-0018.  \\  
R.~Mori was supported in part by JST PRESTO Grant $\#$JPMJPR1867 and JSPS KAKENHI Grant $\#$JP17K17711, $\#$JP18H04090 and $\#$JP20H04138.

\def\gathen#1{{#1}}


\begin{thebibliography}{10}

\bibitem{AKS04}
Manindra Agrawal, Neeraj Kayal, and Nitin Saxena.
\newblock P{RIMES} is in {P}.
\newblock {\em Ann. of Math. (2)}, 160(2):781--793, 2004.

\bibitem{AlSh92}
Jean-Paul Allouche and Jeffrey Shallit.
\newblock The ring of {$k$}-regular sequences.
\newblock {\em Theoret. Comput. Sci.}, 98(2):163--197, 1992.

\bibitem{Arnold13}
Andrew Arnold.
\newblock A new truncated {F}ourier transform algorithm.   
\newblock In {\em Proceedings of ISSAC'13}, pages 15--22. ACM Press, 2013.

\bibitem{Bernstein04}
D.~J. Bernstein.
\newblock Removing redundancy in high-precision {N}ewton iteration, 2004.
\newblock Preprint, \url{http://cr.yp.to/fastnewton.html}.

\bibitem{Bernstein08}
Daniel~J. Bernstein.
\newblock Fast multiplication and its applications.
\newblock In {\em Algorithmic number theory: lattices, number fields, curves
  and cryptography}, volume~44 of {\em Math. Sci. Res. Inst. Publ.}, pages
  325--384. Cambridge Univ. Press, 2008.

\bibitem{BiPa94}
Dario Bini and Victor~Y. Pan.
\newblock {\em Polynomial and matrix computations. {V}ol. 1}.
\newblock Progress in Theoretical Computer Science. Birkh\"{a}user Boston,
  Inc., Boston, MA, 1994.
\newblock Fundamental algorithms.

\bibitem{BSS99}
I.~Blake, G.~Seroussi, and N.~Smart.
\newblock {\em Elliptic curves in cryptography}, volume 265 of {\em London
  Math. Soc. Lecture Note Ser.}
\newblock Cambridge University Press, 1999.

\bibitem{BoLeSc03}
A.~Bostan, G.~Lecerf, and {\'E}.~Schost.
\newblock Tellegen's principle into practice.
\newblock In {\em Proceedings of ISSAC'03}, pages 37--44. ACM Press, 2003.

\bibitem{Bostan20}
Alin Bostan.
\newblock Computing the {$N$}-th {T}erm of a $q$-{H}olonomic {S}equence.
\newblock In {\em Proceedings of ISSAC'20}, pages 46--53. ACM Press, 2020.

\bibitem{BoCaChDu19}
Alin Bostan, Xavier Caruso, Gilles Christol, and Philippe Dumas.
\newblock Fast coefficient computation for algebraic power series in positive
  characteristic.
\newblock In {\em Proceedings of the {T}hirteenth {A}lgorithmic {N}umber
  {T}heory {S}ymposium}, volume~2 of {\em Open Book Ser.}, pages 119--135.
  Math. Sci. Publ., Berkeley, CA, 2019.

\bibitem{BoChDu16}
Alin Bostan, Gilles Christol, and Philippe Dumas.
\newblock Fast computation of the {$N$}th term of an algebraic series over a
  finite prime field.
\newblock In {\em Proceedings of ISSAC'16}, pages 119--126. ACM Press, 2016.

\bibitem{BoGaSc07}
Alin Bostan, Pierrick Gaudry, and \'{E}ric Schost.
\newblock Linear recurrences with polynomial coefficients and application to
  integer factorization and {C}artier-{M}anin operator.
\newblock {\em SIAM J. Comput.}, 36(6):1777--1806, 2007.

\bibitem{BGY80}
Richard~P. Brent, Fred~G. Gustavson, and David Y.~Y. Yun.
\newblock Fast solution of {T}oeplitz systems of equations and computation of
  {P}ad\'{e} approximants.
\newblock {\em J. Algorithms}, 1(3):259--295, 1980.

\bibitem{CDJPS07}
Neil Calkin, Jimena Davis, Kevin James, Elizabeth Perez, and Charles Swannack.
\newblock Computing the integer partition function.
\newblock {\em Math. Comp.}, 76(259):1619--1638, 2007.

\bibitem{CeMiPi87}
L.~Cerlienco, M.~Mignotte, and F.~Piras.
\newblock Suites récurrentes linéaires. {P}ropriétés algébriques et  arithmétiques.
\newblock {\em L'Enseignement Mathématique}, 33:67--108, 1987.

\bibitem{ChCh88}
D.~V. Chudnovsky and G.~V. Chudnovsky.
\newblock Approximations and complex multiplication according to {R}amanujan.
\newblock In {\em Ramanujan revisited ({U}rbana-{C}hampaign, {I}ll., 1987)},
  pages 375--472. Academic Press, Boston, MA, 1988.

\bibitem{CoTu65}
James~W. Cooley and John~W. Tukey.
\newblock An algorithm for the machine calculation of complex {F}ourier series.
\newblock {\em Math. Comp.}, 19:297--301, 1965.

\bibitem{CuHa89}
Paul Cull and James~L. Holloway.
\newblock Computing {F}ibonacci numbers quickly.
\newblock {\em Inform. Process. Lett.}, 32(3):143--149, 1989.

\bibitem{Rocquigny1899}
G.~de~Rocquigny.
\newblock Question 1541. [I25a].
\newblock {\em L'Intermédiaire des mathématiciens}, 6:148, 1899.
                   
\bibitem{Dickson1919}
Leonard~Eugene Dickson.
\newblock {\em History of the theory of numbers. {V}ol. {I}: {D}ivisibility and
  primality}.
\newblock Publication No. 256. Washington: Carnegie Institution of Washington, Vol. 1, 1919.

\bibitem{Dijkstra79}
E.W. Dijkstra.
\newblock In honour of {F}ibonacci.
\newblock In {\em Program Construction. Lecture Notes in Computer Science 69,
  ed. F.L. Bauer et al.}, pages 49--50. Springer, Berlin, Heidelberg, 1979.

\bibitem{Er83}
M.~C. Er.
\newblock Computing sums of order-{$k$} {F}ibonacci numbers in log time.
\newblock {\em Inform. Process. Lett.}, 17(1):1--5, 1983.

\bibitem{Er86}
M.~C. Er.
\newblock A formal derivation of an {$O({\rm log}\,n)$} algorithm for computing
  {F}ibonacci numbers.
\newblock {\em J. Inform. Optim. Sci.}, 7(1):9--15, 1986.

\bibitem{Er88}
M.~C. Er.
\newblock An {$O(k^2\log (n/k))$} algorithm for computing generalized
  order-{$k$} {F}ibonacci numbers with linear space.
\newblock {\em J. Inform. Optim. Sci.}, 9(3):343--353, 1988.

\bibitem{Fiduccia82}
Charles~M. Fiduccia.
\newblock The $n$-th power of a companion matrix: Fast solutions to linear recurrences.
\newblock 20th Proceedings of the Annual Allerton Conference on Communication, Control and Computing, pages 934--940, 1982.

\bibitem{Fiduccia85}
Charles~M. Fiduccia.
\newblock An efficient formula for linear recurrences.
\newblock {\em SIAM J. Comput.}, 14(1):106--112, 1985.

\bibitem{GaGe13}
J.~\gathen{von zur} Gathen and J.~Gerhard.
\newblock {\em Modern computer algebra}.
\newblock Cambridge Univ. Press, third edition, 2013.

\bibitem{Giesbrecht95}
Mark Giesbrecht.
\newblock Nearly optimal algorithms for canonical matrix forms.
\newblock {\em SIAM J. Comput.}, 24(5):948--969, 1995.

\bibitem{GiGo05}
Kenneth~J. Giuliani and Guang Gong.
\newblock New {LFSR}-{B}ased {C}ryptosystems and the {T}race {D}iscrete {L}og
  {P}roblem ({T}race-{DLP}).
\newblock In Tor Helleseth, Dilip Sarwate, Hong-Yeop Song, and Kyeongcheol
  Yang, editors, {\em Sequences and Their Applications - SETA 2004}, pages
  298--312, Berlin, Heidelberg, 2005. Springer Berlin Heidelberg.

\bibitem{GiGo06}
Kenneth~J. Giuliani and Guang Gong.
\newblock A new algorithm to compute remote terms in special types of
  characteristic sequences.
\newblock In {\em Sequences and their applications---{SETA} 2006}, volume 4086
  of {\em Lecture Notes in Comput. Sci.}, pages 237--247. Springer, Berlin,
  2006.

\bibitem{GoHa99}
Guang Gong and Lein Harn.
\newblock Public-key cryptosystems based on cubic finite field extensions.
\newblock {\em IEEE Trans. Inform. Theory}, 45(7):2601--2605, 1999.

\bibitem{GoHaWu01}
Guang Gong, Lein Harn, and Huapeng Wu.
\newblock The {GH} public-key cryptosystem.
\newblock In {\em Selected areas in cryptography}, volume 2259 of {\em Lecture
  Notes in Comput. Sci.}, pages 284--300. Springer, Berlin, 2001.

\bibitem{GHMV13}
Santos Gonz\'{a}lez, Lloren\c{c} Huguet, Consuelo Mart\'{\i}nez, and Hugo
  Villafa\~{n}e.
\newblock Discrete logarithm like problems and linear recurring sequences.
\newblock {\em Adv. Math. Commun.}, 7(2):187--195, 2013.

\bibitem{GrLe80}
David Gries and Gary Levin.
\newblock Computing {F}ibonacci numbers (and similarly defined functions) in
  log time.
\newblock {\em Inform. Process. Lett.}, 11(2):68--69, 1980.

\bibitem{HaQuZi04}
Guillaume Hanrot, Michel Quercia, and Paul Zimmermann.
\newblock The middle product algorithm. {I}.
\newblock {\em Appl. Algebra Engrg. Comm. Comput.}, 14(6):415--438, 2004.

\bibitem{Harvey14}
David Harvey.
\newblock Counting points on hyperelliptic curves in average polynomial time.
\newblock {\em Ann. of Math. (2)}, 179(2):783--803, 2014.

\bibitem{HaRo10}
David Harvey and Daniel~S. Roche.
\newblock An in-place truncated {F}ourier transform and applications to
  polynomial multiplication.   
\newblock In {\em Proceedings of ISSAC'10}, pages 325--329. ACM Press, 2010.

\bibitem{Hoeven04}
Joris van~der Hoeven.
\newblock The truncated {F}ourier transform and applications.
\newblock In {\em Proceedings of ISSAC'04}, pages 290--296. ACM Press, 2004.


\bibitem{Holloway88}
J.~L. Holloway.
\newblock Algorithms for computing {F}ibonacci numbers quickly.
\newblock MSc Thesis. Oregon State Univ. 1988.

\bibitem{Householder59}
Alston~S. Householder.
\newblock Dandelin, {L}oba\v{c}evski\u{\i}, or {G}raeffe?
\newblock {\em Amer. Math. Monthly}, 66:464--466, 1959.

\bibitem{HyMeScSt19}
Seung~Gyu Hyun, Stephen Melczer, \'{E}ric Schost, and Catherine St-Pierre.
\newblock Change of basis for {$\mathfrak{m}$}-primary ideals in one and two
  variables.
\newblock In {\em Proceedings of ISSAC'19}, pages 227--234. ACM Press, 2019.

\bibitem{HyMeSt19}
Seung~Gyu Hyun, Stephen Melczer, and Catherine St-Pierre.
\newblock A fast algorithm for solving linearly recurrent sequences.
\newblock {\em {ACM} Communications in Computer Algebra}, page 100–103, 2019.

\bibitem{Keller85}
Walter Keller-Gehrig.
\newblock Fast algorithms for the characteristic polynomial.
\newblock {\em Theoret. Comput. Sci.}, 36(2-3):309--317, 1985.

\bibitem{Khomovsky18}
Dmitry~I. Khomovsky.
\newblock Efficient computation of terms of linear recurrence sequences of any
  order.
\newblock {\em Integers}, 18:Paper No. A39, 12, 2018.

\bibitem{Knuth69}
Donald~E. Knuth.
\newblock {\em The art of computer programming. {V}ol. 2: {S}eminumerical
  algorithms}.
\newblock Addison-Wesley Publishing Co., Reading, Mass.-London-Don Mills, Ont,  
first edition, 1969.

\bibitem{Knuth81}
Donald~E. Knuth.
\newblock {\em The art of computer programming. {V}ol. 2}.
\newblock Addison-Wesley Publishing Co., Reading, Mass., second edition, 1981.
\newblock Seminumerical algorithms, Addison-Wesley Series in Computer Science
  and Information Processing.

\bibitem{Knuth81e}
Donald~E. Knuth.
\newblock The {L}ast {W}hole {E}rrata {C}atalog, 1981.
\newblock Department of Computer Science, Stanford University, Report. No.
  STAN-CS-81-868,
  \url{http://infolab.stanford.edu/pub/cstr/reports/cs/tr/81/868/CS-TR-81-868.pdf}.

\bibitem{Lecerf10}
Gr\'{e}goire Lecerf.
\newblock New recombination algorithms for bivariate polynomial factorization
  based on {H}ensel lifting.
\newblock {\em Appl. Algebra Engrg. Comm. Comput.}, 21(2):151--176, 2010.

\bibitem{LuZi08}
Katharina L\"{u}rwer-Br\"{u}ggemeier and Martin Ziegler.
\newblock On faster integer calculations using non-arithmetic primitives.
\newblock In {\em Unconventional computation}, volume 5204 of {\em Lecture
  Notes in Comput. Sci.}, pages 111--128. Springer, Berlin, 2008.

\bibitem{MaRe84}
Alain~J. Martin and Martin Rem.
\newblock A presentation of the {F}ibonacci algorithm.
\newblock {\em Inform. Process. Lett.}, 19(2):67--68, 1984.

\bibitem{NaPa13}
J.~M. McNamee and V.~Y. Pan.
\newblock {\em Numerical methods for roots of polynomials. {P}art {II}},
  volume~16 of {\em Studies in Computational Mathematics}.
\newblock Elsevier/Academic Press, Amsterdam, 2013.

\bibitem{Mezzarobba10}
Marc Mezzarobba.
\newblock Num{G}fun: a package for numerical and analytic computation and
  {D}-finite functions.         
\newblock In {\em Proceedings of ISSAC'10}, pages 139--146. ACM Press, 2010.

\bibitem{Mihailescu08}
Preda Mih\u{a}ilescu.
\newblock Fast convolutions meet {M}ontgomery.
\newblock {\em Math. Comp.}, 77(262):1199--1221, 2008.

\bibitem{MiBr66}
J.~C.~P. Miller and D.~J.~Spencer Brown.
\newblock An algorithm for evaluation of remote terms in a linear recurrence
  sequence.
\newblock {\em Computer Journal}, 9:188--190, 1966.

\bibitem{Montgomery85}
Peter~L. Montgomery.
\newblock Modular multiplication without trial division.
\newblock {\em Math. Comp.}, 44(170):519--521, 1985.

\bibitem{NuDu13}
Gregory Nuel and Jean-Guillaume Dumas.
\newblock Sparse approaches for the exact distribution of patterns in long
  state sequences generated by a {M}arkov source.
\newblock {\em Theoret. Comput. Sci.}, 479:22--42, 2013.

\bibitem{Pan87}
V.~Pan.
\newblock Algebraic complexity of computing polynomial zeros.
\newblock {\em Comput. Math. Appl.}, 14(4):285--304, 1987.

\bibitem{Pan97}
Victor~Y. Pan.
\newblock Solving a polynomial equation: some history and recent progress.
\newblock {\em SIAM Rev.}, 39(2):187--220, 1997.

\bibitem{PaSt73}
Michael~S. Paterson and Larry~J. Stockmeyer.
\newblock On the number of nonscalar multiplications necessary to evaluate
  polynomials.
\newblock {\em SIAM J. Comput.}, 2:60--66, 1973.

\bibitem{Pettorossi80}
Alberto Pettorossi.
\newblock Derivation of an {$O(k^{2}\,{\rm log}\,n)$} algorithm for computing
  order-{$k$} {F}ibonacci numbers from the {$O(k^{3}\,{\rm log}\,n)$} matrix
  multiplication method.
\newblock {\em Inform. Process. Lett.}, 11(4-5):172--179, 1980.

\bibitem{Polya22}
G.~P\'olya.
\newblock Sur les s\'eries enti\`eres, dont la somme est une fonction
  alg\'ebrique.
\newblock {\em Enseignement Math.}, 22:38--47, 1921/1922.

\bibitem{Poorten89}
A.~J. van~der Poorten.
\newblock Some facts that should be better known, especially about rational
  functions.
\newblock In {\em Number theory and applications ({B}anff, {AB}, 1988)}, volume
  265 of {\em NATO Adv. Sci. Inst. Ser. C Math. Phys. Sci.}, pages 497--528.
  Kluwer Acad. Publ., Dordrecht, 1989.

\bibitem{PrTa89}
Marco Protasi and Maurizio Talamo.
\newblock On the number of arithmetical operations for finding {F}ibonacci
  numbers.
\newblock {\em Theoret. Comput. Sci.}, 64(1):119--124, 1989.

\bibitem{Ranum11}
Arthur Ranum.
\newblock The general term of a recurring series.
\newblock {\em Bull. Amer. Math. Soc.}, 17(9):457--461, 1911.

\bibitem{RaEi04}
B.~Ravikumar and G.~Eisman.
\newblock Weak minimization of {DFA}---an algorithm and applications.
\newblock {\em Theoret. Comput. Sci.}, 328(1-2):113--133, 2004.
             
\bibitem{REMLP1900}
Rosace, E.-B. Escott, E.~Malo, C.-A. Laisant, and G.~Picou.
\newblock Answers to {Q}uestion 1541. [{I}25a] asked by {G}. de {R}ocquigny.
\newblock {\em L'Intermédiaire des mathématiciens}, 7:172--177, 1900.

\bibitem{Schonhage00}
Arnold Sch\"{o}nhage.
\newblock Variations on computing reciprocals of power series.
\newblock {\em Inform. Process. Lett.}, 74(1-2):41--46, 2000.

\bibitem{Shortt78}
Joseph Shortt.
\newblock An iterative program to calculate {F}ibonacci numbers in {$O({\rm
  log}\,n)$} arithmetic operations.
\newblock {\em Inform. Process. Lett.}, 7(6):299--303, 1978.

\bibitem{Shoup91}
Victor Shoup.
 \newblock A fast deterministic algorithm for factoring polynomials over finite
   fields of small characteristic.
\newblock In {\em Proceedings of ISSAC'91}, pages 14--21. ACM Press, 1991.

\bibitem{Shoup95}
Victor Shoup.
\newblock A new polynomial factorization algorithm and its implementation.
\newblock {\em J. Symbolic Comput.}, 20(4):363--397, 1995.

\bibitem{Storjohann02}
A.~Storjohann.
\newblock High-order lifting.
\newblock In {\em Proceedings of ISSAC'02}, pages 246--254. ACM Press, 2002.

\bibitem{Strassen76}
V.~Strassen.
\newblock Einige {R}esultate über {B}erechnungskomplexität.
\newblock {\em Jahresbericht der Deutschen Mathematiker-Vereinigung},
  78(1):1--8, 1976/77.

\bibitem{Strassen74}
Volker Strassen.
\newblock Polynomials with rational coefficients which are hard to compute.
\newblock {\em SIAM J. Comput.}, 3:128--149, 1974.

\bibitem{Takahashi00}
Daisuke Takahashi.
\newblock A fast algorithm for computing large {F}ibonacci numbers.
\newblock {\em Inform. Process. Lett.}, 75(6):243--246, 2000.

\bibitem{Urbanek80}
Friedrich~J. Urbanek.
\newblock An {$O({\rm log}$} {$n)$} algorithm for computing the {$n$}th element
  of the solution of a difference equation.
\newblock {\em Inform. Process. Lett.}, 11(2):66--67, 1980.

\bibitem{WiSh80}
Thomas~C. Wilson and Joseph Shortt.
\newblock An {$O({\rm log}$} {$n)$} algorithm for computing general order-{$k$}
  {F}ibonacci numbers.
\newblock {\em Inform. Process. Lett.}, 10(2):68--75, 1980.

\end{thebibliography}
\end{document}